\documentclass{llncs}
\usepackage[utf8]{inputenc}
\usepackage{capt-of}
\usepackage{graphicx} 
\usepackage{latexsym}
\usepackage{url}
\usepackage{color}
\usepackage{multirow}
\usepackage{xspace}
\usepackage{multicol}
\usepackage{alltt}
\usepackage{amssymb}
\usepackage{amsmath}
\usepackage{ifthen}
\usepackage{bussproofs}
\usepackage{wrapfig}
\usepackage{microtype}


\newcommand{\hkb}{\variable{\qclauset_t}}
\newcommand{\eVar}{\variable{e}}
\newcommand{\uVar}{\variable{x}}
\newcommand{\variable}[1]{\ensuremath{#1}\xspace}
\newcommand{\eVarNeg}{\variable{\overline{\eVar}}}
\newcommand{\uVarNeg}{\variable{\overline{\uVar}}}
\newcommand{\clause}{\variable{C}}

\newcommand{\rareqs}{RAReQS\xspace}
\newcommand{\ghostq}{GhostQ\xspace}
\newcommand{\caqe}{CAQE\xspace}
\newcommand{\qesto}{QESTO\xspace}
\newcommand{\depqbf}{DepQBF\xspace}
\newcommand{\qell}{QELL\xspace}
\newcommand{\qellc}{QELL-c\xspace}
\newcommand{\qellnc}{QELL-nc\xspace}
\newcommand{\depqbfDynBloqqerTFTT}{DQ-BAT\xspace}
\newcommand{\depqbfDynTFTT}{DQ-AT\xspace}
\newcommand{\depqbfOnlyDynBloqqer}{DQ-B\xspace}
\newcommand{\depqbfOnlyDynQBCE}{DQ\xspace}
\newcommand{\depqbfOnlyDynTF}{DQ-A\xspace}
\newcommand{\depqbfOnlyDynTT}{DQ-T\xspace}
\newcommand{\depqbfNoQBCE}{DQ-nQ\xspace}
\newcommand{\depqbfNoQBCEDynBloqqerTFTT}{DQ-nQBAT\xspace}
\newcommand{\depqbfNoQBCETFTT}{DQ-nQAT\xspace}
\newcommand{\depqbfNoQBCETT}{DQ-nQT\xspace}

\newcommand{\true}{\textsf{T}\xspace}
\newcommand{\false}{\textsf{F}\xspace}

\newcommand{\bloqqer}{Bloqqer\xspace}
\newcommand{\prefix}{\Pi}
\newcommand{\clauset}{\psi}
\newcommand{\qclauset}{\phi}
\newcommand{\dec}[1]{\mathsf{dec}(#1)}
\newcommand{\der}[1]{\mathsf{der}(#1)}
\newcommand{\lit}[1]{\mathsf{var}(#1)}

\newcommand{\quant}[2]{\mathsf{Q}(#1,#2)}

\newcommand{\UR}{\mathit{UR}}
\newcommand{\ER}{\mathit{ER}}
\newcommand{\satequiv}{\equiv_{\mathit{sat}}}

\newcommand{\qrescalc}{QRES\xspace}

\newcommand{\eabs}[1]{\mathit{Abs_{\exists}}(#1)}
\newcommand{\aabs}[1]{\mathit{Abs_{\forall}}(#1)}


\begin{document}

\author{Florian Lonsing\inst{1} \and
 Uwe Egly\inst{1}
\and Martina Seidl\inst{2}}

\institute{Knowledge-Based Systems Group, Vienna University of Technology, Vienna, Austria \and 
Institute for Formal Models and Verification, JKU, Linz, Austria
}

\title{Q-Resolution with Generalized Axioms\thanks{Supported by the Austrian Science Fund (FWF)
    under grants S11408-N23 and S11409-N23. This article will appear in the
\textbf{proceedings} of the \emph{19th International Conference on Theory and
Applications of Satisfiability Testing (SAT)}, LNCS, Springer, 2016.}}
\maketitle
\begin{abstract}
Q-resolution is a proof system for quantified Boolean formulas (QBFs) in
prenex conjunctive normal form (PCNF) which underlies search-based QBF solvers
with clause and cube learning (QCDCL).
With the aim to derive and learn stronger 
clauses and cubes earlier in the search, 
we generalize the axioms of the Q-resolution calculus
resulting in an exponentially more powerful proof system. 
The generalized axioms introduce an interface of Q-resolution to any 
other QBF proof system allowing for the direct combination 
of orthogonal solving techniques. 
We implemented a variant of the Q-resolution calculus with
generalized axioms in the QBF solver DepQBF. As two case studies, 
we apply integrated SAT
solving and resource-bounded QBF preprocessing during the search to
heuristically detect potential axiom applications. Experiments with
application benchmarks indicate a substantial performance 
improvement.
\end{abstract}


\section{Introduction}

In the same way as SAT, the decision problem of propositional logic, is the
archetypical problem complete for the complexity class NP, 
QSAT, the decision problem of \emph{quantified Boolean formulas (QBF)},
is the archetypical problem complete for the complexity class PSPACE. 
The fact that many important practical reasoning, verification, 
and synthesis problems 
fall into the latter complexity class (cf.~\cite{DBLP:journals/jsat/BenedettiM08} for an overview)
strongly motivates the quest for efficient QBF solvers. 

As the languages of propositional logic and QBF only marginally differ 
from a syntactical point of view, namely the quantifiers, it is 
a natural approach to take inspiration from SAT solving and 
lift powerful SAT techniques to QSAT.
Motivated by the success of \emph{conflict-driven clause learning} 
(CDCL) in SAT solving~\cite{DBLP:series/faia/SilvaLM09}, 
a generalized version of CDCL called 
\emph{conflict/solution-driven clause/cube learning} 
(often abbreviated by QCDCL) is applied in QSAT 
solving~\cite{DBLP:series/faia/GiunchigliaMN09}. 
Given a propositional 
formula in conjunctive normal form (CNF), a CDCL-based SAT solver
enriches the original CNF with clauses---already found and justified 
conflicts---which force the solver into a different area of the 
search space until either a model, i.e., a satisfying variable assignment, 
is found or until the CNF is proven to be unsatisfiable.  
If a QBF in prenex conjunctive normal form (PCNF) is unsatisfiable
then QCDCL works similar, apart from technical details. 
In the case of satisfiability, however, 
it is not sufficient to find one assignment satisfying 
the formula. To respect the semantics of universal quantification, 
QBF models have to be described either by assignment trees or 
by Skolem functions. Hence, a QBF solver may not abort the search 
if a satisfying assignment is found. Dual to clause learning, a 
cube (a conjunction of literals) is learned and the search is resumed. 
QCDCL is
implemented in several state-of-the-art 
QBF \mbox{solvers~\cite{DBLP:journals/jair/GiunchigliaNT06,DBLP:journals/iandc/BuningKF95,DBLP:conf/tableaux/Letz02,DBLP:conf/iccad/ZhangM02}.}

Apart from QCDCL, orthogonal approaches to QBF solving have been developed.  
QBF competitions like the QBF Galleries 2013~\cite{Lonsing201692} 
and 2014~\cite{gallery14} revealed 
the power of 
\emph{expansion-based approaches}~\cite{DBLP:conf/fmcad/AyariB02,DBLP:conf/sat/Biere04a,Janota20161},
which are based on a different 
proof system than search-based solving with QCDCL.
We refer to related work~\cite{DBLP:conf/mfcs/BeyersdorffCJ14,beyersdorff_et_al:LIPIcs:2015:4905} for an overview of QBF 
proof systems.
QCDCL relies on 
Q-resolution~\cite{DBLP:journals/iandc/BuningKF95}. 
Traditionally, Q-resolution calculi\footnote{Note that there are 
different variants of Q-resolution, e.g., long-distance resolution~\cite{DBLP:conf/iccad/ZhangM02}, QU-resolution~\cite{DBLP:conf/cp/Gelder12}, etc.~\cite{DBLP:conf/sat/BalabanovWJ14,beyersdorff_et_al:LIPIcs:2015:4905}.}
offer two kinds of axioms with limited deductive power: 
(i) the \emph{clause axiom} stating that any clause in the  
CNF part of a QBF  can be immediately derived  
and (ii) the \emph{cube axiom} allowing to derive cubes which are 
propositional implicants of the CNF. 
In previous work~\cite{DBLP:conf/lpar/LonsingBBES15}, we generalized the cube axiom 
such that quantified 
blocked clause elimination (QBCE)~\cite{DBLP:conf/cade/BiereLS11}, a 
clause elimination procedure for  
preprocessing, could be tightly integrated in QCDCL 
for learning smaller cubes earlier in the search. 

To overcome the restrictions of the traditional axioms of Q-resolution, we
extend previous work~\cite{DBLP:conf/lpar/LonsingBBES15} on the cube axiom and present more
powerful clause axioms. We generalize the traditional clause and cube 
axioms such that their application relies on checking the satisfiability of
the PCNF under the current assignment in QCDCL. This way, the axioms can be
applied earlier in the search. Further, they provide a
framework to combine Q-resolution with any other (complete or incomplete) QBF
proof system. We implemented the generalized axioms in the QCDCL solver
DepQBF. As a case study, we integrated bounded expansion and SAT-based
abstraction~\cite{DBLP:conf/aaai/CadoliGS98} in QCDCL as incomplete QBF
solving techniques to detect potential axiom applications. Experimental
results indicate a substantial performance increase, particularly on
application benchmarks.

This paper is structured as follows. In Sections~\ref{sec:prelim} and~\ref{sec:qcdcl},
we introduce preliminaries and recapitulate search-based QBF solving with QCDCL and 
traditional Q-resolution. Then we 
generalize the axioms of Q-resolution in Section~\ref{sec:gen:axioms}
allowing for the integration of other proof systems. 
In Section~\ref{sec:abs:axiom} we integrate SAT-based 
abstraction into QCDCL. 
Implementation and evaluation are  discussed in Section~\ref{sec:exp}. We conclude with a summary and an outlook to future work in 
Section~\ref{sec:concl}.


\section{Preliminaries}
\label{sec:prelim}

We introduce the concepts and terminology used in 
the rest of the paper. A \emph{literal} is a variable $x$ or 
its negation $\bar x$. By $\bar l$ we denote the negation of 
literal $l$ and $\lit{l} := x$ if $l = x$ or 
$l = \bar x$. A disjunction, resp.\ conjunction, of literals 
is called \emph{clause}, resp.\ \emph{cube}.  
A propositional formula 
in \emph{conjunctive normal form} (CNF) is a conjunction of 
clauses. 
If convenient, 
we interpret a CNF as a set of clauses, and clauses and cubes as sets of literals. 
A QBF in \emph{prenex conjunctive normal form} (PCNF) 
has the form $\prefix.\clauset$ with prefix $\prefix := Q_1X_1 \ldots Q_nX_n$ and matrix $\clauset$, 
where $\clauset$ is a propositional CNF over the variables 
defined in $\prefix$. 
The variable sets $X_i$ are pairwise disjoint and for
$Q_i \in \{\forall, \exists\}$, $Q_i \not= Q_{i+1}$.
We define $\lit{\prefix} := X_1 \cup \ldots \cup X_n$. 
 The quantifier
$\quant{\prefix}{l}$ of  a literal $l$ is  $Q_i$ if
$\lit{l} \in X_i$. If $\quant{\prefix}{l} = Q_i$
and $\quant{\prefix}{k} = Q_j$,
then $l \leq_\prefix k$ iff $i \leq j$.
For a clause or cube $C$, $\lit{C} := \{\lit{l} \mid l \in C\}$
and for CNF $\clauset$, $\lit{\clauset} := \{ \lit{l} \mid l \in C, 
C \in \clauset \}$.

An \emph{assignment} $A$ is a mapping from the variables $\lit{\prefix}$ 
of a QBF $\prefix.\clauset$ to truth values \emph{true} and \emph{false}. We
represent $A$ 
as a set of literals $A = \{l_1,\ldots,l_n\}$ with $\{\lit{l_i} \mid l_i \in A\} \subseteq 
\lit{\prefix}$ such that if a variable $x$ is
assigned \emph{true} then $l_i \in A$ and $l_i = x$, and if $x$ is
assigned \emph{false} then $l_i \in A$ and $l_i = \bar x$. 
Further, for any $l_i, l_j \in A$ with $i \not= j$,  
$\lit{l_i} \not= \lit{l_j}$. An assignment $A$ is \emph{partial} if it does
not map every variable in $\lit{\prefix}$ to a truth value, i.e., $\{\lit{l_i} \mid l_i \in A\} \subset 
\lit{\prefix}$.
A QBF $\qclauset$ \emph{under assignment} $A$, 
written as 
$\qclauset[A]$, is the QBF
obtained from $\qclauset$ in which 
for all $l \in A$, all clauses containing $l$ are removed, 
all occurrences of $\bar l$ are deleted, and $\lit{l}$ is 
removed from the prefix.
If the matrix of $\qclauset[A]$ is empty, then 
the matrix is satisfied by $A$ and 
$A$ is a \emph{satisfying assignment} (written as $\qclauset[A] = \true$).
If the matrix of  $\qclauset[A]$ contains the empty clause, 
then the matrix is falsified by $A$ and 
$A$ is a \emph{falsifying assignment} 
(written as  $\qclauset[A] = \false$). 
A QBF $\prefix.\clauset$ 
with $Q_1 = \exists$ (resp.\ $Q_1 = \forall$) 
is satisfiable iff
$\prefix.\clauset[\{x\}]$ or (resp.\ and)
$\prefix.\clauset[\{\bar x\}]$ is satisfiable where
$x \in X_1$.
Two QBFs $\qclauset$ and $\qclauset'$ are \emph{satisfiability-equivalent},
written as $\qclauset \satequiv \qclauset'$, iff
$\qclauset$ is satisfiable whenever $\qclauset'$ is
satisfiable. Two propositional CNFs $\clauset$ and $\clauset'$ are
\emph{logically equivalent}, written as $\clauset \equiv \clauset'$, 
iff they
have the same set of propositional models, i.e., 
satisfying assignments. 
Two simplification rules preserving 
satisfiability equivalence are \emph{unit} and \emph{pure literal detection}. 
If a QBF $\qclauset$ contains a unit clause 
$C = (l)$, where $\quant{\prefix}{l} = \exists$, then 
$\qclauset \satequiv \qclauset[\{l\}]$. If a literal is 
pure in QBF $\qclauset$, i.e., $\qclauset$ contains $l$ but not $\bar l$, 
then $\qclauset \satequiv \qclauset[\{l\}]$ if $\quant{\prefix}{l} = \exists$ 
and $\qclauset \satequiv \qclauset[\{\bar l\}]$ otherwise.


\section{QCDCL-Based QBF Solving}
\label{sec:qcdcl}

Figure~\ref{fig:qcdcl} shows an abstract workflow of traditional search-based
QBF solving with
QCDCL~\cite{DBLP:journals/jair/GiunchigliaNT06,DBLP:journals/iandc/BuningKF95,DBLP:conf/tableaux/Letz02,DBLP:conf/iccad/ZhangM02}.
Given a PCNF $\phi$, assignments $A$ are successively generated (box in top
left corner of Fig.~\ref{fig:qcdcl}). In general,
variables must be assigned in the ordering of the
quantifier prefix. Variables may either be assigned tentatively as
\emph{decisions} or by a QBF-specific variant of \emph{Boolean constraint
propagation (QBCP)}. QBCP consists of unit and pure literal
detection. Assignments of variables carried out in QBCP do not have to follow
the prefix ordering. We formalize the assignments generated during a run of
QCDCL as follows.

\begin{figure}[t]
\centering
\includegraphics{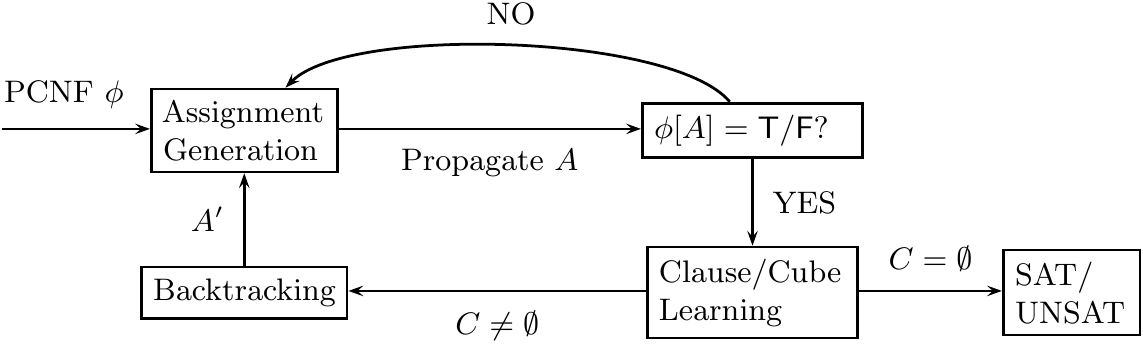}
\caption{Abstract workflow of QCDCL with traditional Q-resolution axioms.}
\label{fig:qcdcl}
\end{figure}

\begin{definition}[QCDCL Assignment]
\label{def:assign} Given a QBF $\qclauset = \prefix.\clauset$. Let assignment
$A = A' \cup A''$ where $A'$ are variables assigned as decisions and $A''$ are
variables assigned by unit/pure literal detection.  $A$ is a \emph{QCDCL
assignment} if (1) for a maximal $l \in A'$ with $\forall l' \in A' : l'
\leq_\prefix l$ it holds that $\forall x \in \lit{\prefix} <_\prefix l : x \in
\lit{A}$ and (2) all $l \in A''$ are unit/pure in $\qclauset[A']$ after
applying QBCP until completion.
\end{definition}

QCDCL generates only QCDCL assignments by
Definition~\ref{def:assign}. Assignment generation by decisions and QBCP
continues until the current assignment~$A$ is either falsifying or satisfying
by checking whether $\qclauset[A]
= \false$ or $\qclauset[A] = \true$ (box in top
right corner of Fig.~\ref{fig:qcdcl}). In these cases, a new
\emph{learned clause} or \emph{learned cube} is derived in a \emph{learning} phase,
which is based on the
\emph{Q-resolution calculus}.

\begin{definition}[Q-Resolution Calculus] \label{def_qres_calculus}
Let $\qclauset = \prefix.\clauset$ be a PCNF. The rules of the \emph{Q-resolution 
calculus (\qrescalc)} are as follows. 
\begin{align}\tag{$\mathit{res}$}
\AxiomC{$C_1 \cup \{p\}$}
\AxiomC{$C_2 \cup \{\bar p\}$}
\BinaryInfC{$C_1 \cup C_2$}
\label{rule_res}
\DisplayProof
\quad
\begin{minipage}{0.55\textwidth}
if for all $x \in \prefix\colon \{x, \bar x\} \not \subseteq (C_1 \cup C_2)$, 
\\
$\bar p \not \in C_1$, $p
\not \in C_2$, and either \\(1) $C_1$,$C_2$ are
clauses and $\quant{\prefix}{p} = \exists$ or \\ (2) $C_1$,$C_2$ are cubes and $\quant{\prefix}{p} = \forall$
\end{minipage} 
\end{align}

\vspace{-0.25cm}

\begin{align}\tag{$\mathit{red}$}
\AxiomC{$C \cup \{l\}$}
\UnaryInfC{C}
\label{rule_red}
\DisplayProof
\quad
\begin{minipage}{0.75\textwidth}
if for all $x \in \prefix\colon \{x, \bar x\} \not \subseteq (C \cup \{l\})$ and either  
\\
(1) $C$ is a
clause, $\quant{\prefix}{l} = \forall$, \\
\hspace*{0.5cm} 
$l' <_\prefix l$ 
for all $l' \in C$ with $\quant{\prefix}{l'} = \exists$
or \\ 
(2) $C$ is a
cube, $\quant{\prefix}{l} = \exists$, \\
\hspace*{0.5cm} 
$l' <_\prefix l$ 
for all $l' \in C$ with $\quant{\prefix}{l'} = \forall$
\end{minipage} 
\end{align}

\vspace{-0.25cm}

\begin{align}\tag{$\operatorname{\emph{cl-init}}$}
\AxiomC{\phantom{A}}
\UnaryInfC{C}
\label{rule_cl_init}
\DisplayProof
\quad
\begin{minipage}{0.75\textwidth}
if for all $x \in \prefix\colon \{x, \bar x\} \not \subseteq C$,  
$C$ is a clause and $C \in \clauset$
\end{minipage} 
\end{align}

\vspace{-0.35cm}

\begin{align}\tag{$\operatorname{\emph{cu-init}}$}
\AxiomC{\phantom{A}}
\UnaryInfC{C}
\label{rule_cu_init}
\DisplayProof
\quad
\begin{minipage}{0.75\textwidth}
$A$ is a QCDCL assignment,\\ 
$\qclauset[A] = \textnormal{\true}$, \\
and $C = (\bigwedge_{l \in A} l)$ is a cube
\end{minipage} 
\end{align}
\end{definition}

\qrescalc is a proof system which underlies QCDCL. Rule~\ref{rule_cl_init} is
an axiom to derive clauses which are already part of the given PCNF
$\qclauset$. In practice, the clause $C$ selected by axiom~\ref{rule_cl_init} is
falsified under the current QCDCL assignment. Axiom~\ref{rule_cu_init} allows to derive cubes based on a
QCDCL assignment $A$ which satisfies all the clauses of the matrix
$\clauset$ of $\qclauset = \prefix.\clauset$ (i.e., $\qclauset[A] =
\true$). A cube $C$ derived by axiom~\ref{rule_cu_init} is an
\emph{implicant} of $\clauset$, i.e., the implication $C \Rightarrow \clauset$
is \nolinebreak valid.

The \emph{resolution} and \emph{reduction} rules~\ref{rule_res}
and~\ref{rule_red}, respectively, are applied either to clauses or
cubes. Rule~\ref{rule_red} is called \emph{universal (existential) reduction}
when applied to clauses (cubes).  We write $\UR(C)$ ($\ER(C)$) to denote the
clause (cube) resulting from universal (existential) reduction of clause
(cube) $C$. The PCNF $\UR(\qclauset)$ is obtained by universal reduction of
all clauses in the PCNF $\qclauset$.

Q-resolution of clauses~\cite{DBLP:journals/iandc/BuningKF95} generalizes
propositional resolution, which consists of rules~\ref{rule_cl_init} and
\ref{rule_res}, by the reduction rule~\ref{rule_red}. Q-resolution of cubes
was introduced for \emph{cube
learning}~\cite{DBLP:journals/jair/GiunchigliaNT06,DBLP:conf/tableaux/Letz02,DBLP:conf/iccad/ZhangM02},
the dual variant of clause learning.

\qrescalc is sound and refutationally complete for
PCNFs~\cite{DBLP:journals/jair/GiunchigliaNT06,DBLP:journals/iandc/BuningKF95,DBLP:conf/tableaux/Letz02,DBLP:conf/iccad/ZhangM02}. The
empty clause (cube) is derivable from a PCNF $\qclauset$ in \qrescalc if and
only if $\qclauset$ is unsatisfiable (satisfiable). A derivation of the empty
clause (cube) from $\qclauset$ is a \emph{clause (cube) resolution proof} of
$\qclauset$.

In QCDCL, the rules of \qrescalc are applied to derive new learned clauses or
cubes. A learned clause (cube) $C$ is added conjunctively (disjunctively) to
the PCNF $\qclauset = \prefix.\clauset$ to obtain $\prefix.(\clauset \wedge
C)$ ($\prefix.(\clauset \vee C)$). After $C$ has been added, certain
assignments in the current assignment $A$ are retracted during
\emph{backtracking}, resulting in assignment $A'$ ($C \not = \emptyset$ in Fig.~\ref{fig:qcdcl}). Assignment generation based
on $A'$ continues, where learned clauses and cubes participate in QBCP.
Typically, only \emph{asserting} learned clauses and cubes are generated in
QCDCL. A clause (cube) $C$ is asserting if $\UR(C)$ ($\ER(C)$) is unit under $A'$ after
backtracking. QCDCL terminates if and only if the empty clause or cube is
\nolinebreak learned ($C = \emptyset$ in Fig.~\ref{fig:qcdcl}).

\begin{example}[\cite{DBLP:conf/lpar/LonsingBBES15}]
\label{ex_bad_cube_proof}
Given a  PCNF $\qclauset$ with prefix $\exists z, \! z' \forall u \exists
y$ and matrix $\clauset$:

\begin{minipage}{0.45\textwidth}
\flushleft
$
\begin{array}{lll}
\clauset & := & (u \vee \bar y) \wedge (\bar u \vee y) \wedge \mbox{} \\
 & & (z \vee u \vee \bar y) \wedge (z' \vee \bar u \vee y) \wedge \mbox{} \\
& & (\bar z \vee \bar u \vee \bar y) \wedge (\bar z' \vee u \vee y) \\
\end{array}
$
\end{minipage}
\begin{minipage}{0.5\textwidth}
\flushright
\AxiomC{$\phantom{A}$}
\UnaryInfC{$(\bar z \wedge \bar z' \wedge \bar u \wedge \bar y)$}
\UnaryInfC{$(\bar z \wedge \bar z' \wedge \bar u)$}
\AxiomC{$\phantom{B}$}
\UnaryInfC{$(\bar z \wedge \bar z' \wedge u \wedge y)$}
\UnaryInfC{$(\bar z \wedge \bar z' \wedge u)$}
\BinaryInfC{$\bar z \wedge \bar z'$}
\UnaryInfC{$\emptyset$}
\DisplayProof
\medskip
\end{minipage}

\noindent Let $A_1 := \{\bar z, \bar z', \bar u, \bar y\}$ and $A_2 := \{\bar
z, \bar z', u, y\}$ be satisfying QCDCL assignments to be used for
applications of axiom~\ref{rule_cu_init}.  A derivation of the empty cube by
rules~\ref{rule_cu_init}, \ref{rule_red}, \ref{rule_res}, and~\ref{rule_red}
(from top to bottom) is shown on the right.  \hfill $\Diamond$
\end{example}


\section{Generalizing the Axioms of \qrescalc}
\label{sec:gen:axioms}

The axioms~\ref{rule_cl_init} and~\ref{rule_cu_init} of \qrescalc have limited
deductive power. Any clause derived by~\ref{rule_cl_init} already appears in
the matrix $\clauset$ of the PCNF $\qclauset = \prefix.\clauset$. Any cube
derived by~\ref{rule_cu_init} is an implicant of $\clauset$.

To overcome these limitations, we equip \qrescalc with two additional
axioms---one to derive clauses and one to derive cubes---which
generalize~\ref{rule_cl_init} and~\ref{rule_cu_init}.  \emph{Generalized model
generation (GMG)}~\cite{DBLP:conf/lpar/LonsingBBES15} was presented as a new
axiom to derive learned cubes. The combination of \qrescalc with GMG is
stronger than \qrescalc with~\ref{rule_cu_init} in terms of the sizes of cube
resolution proofs it is able to produce. In the following, we formulate a
generalized clause axiom which we combine with \qrescalc in addition to
GMG. Thereby, we obtain a variant of \qrescalc which is stronger than
traditional \qrescalc also in terms of sizes of clause resolution proofs.

\begin{figure}[t]
\centering
\includegraphics{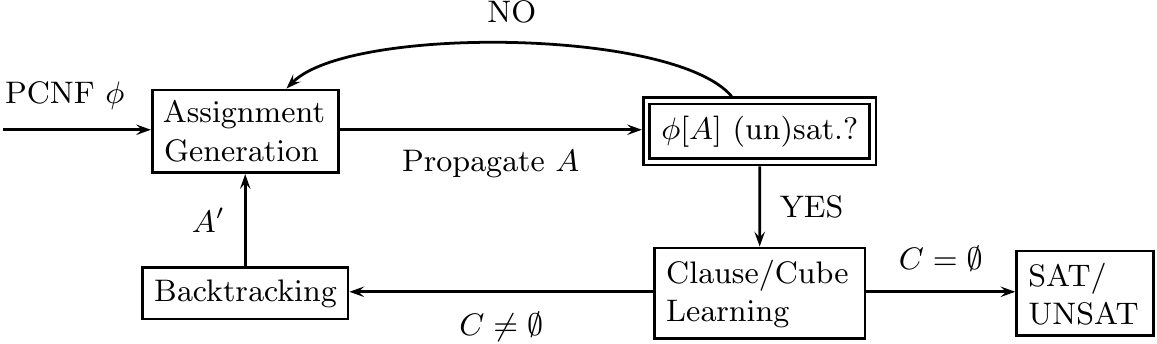}
\caption{Abstract workflow of QCDCL with generalized Q-resolution axioms.}
\label{fig:qcdcl:gen:axioms}
\end{figure}

Figure~\ref{fig:qcdcl:gen:axioms} shows an abstract workflow of search-based
QBF solving with QCDCL relying on \qrescalc \emph{with} generalized axioms. 
This workflow is the same as in Fig.~\ref{fig:qcdcl} except for applications of
axioms (box in top right corner). The
generalized axioms are applied if the PCNF $\qclauset[A]$ under a QCDCL assignment $A$ is \emph{(un)satisfiable}. This is in contrast to the
more restricted conditions $\qclauset[A] = \true$ or $\qclauset[A] = \false$
in Fig.~\ref{fig:qcdcl}. We show that the generalized axioms 
allow to combine \emph{any} sound (but maybe incomplete)
QBF solving technique with QCDCL based on \qrescalc.  First, we define \emph{QCDCL clauses} and recapitulate \emph{QCDCL
cubes}~\cite{DBLP:conf/lpar/LonsingBBES15}.
\begin{definition}[QCDCL Clause/Cube]
\label{def:assigncube}
Given a QBF $\qclauset = \prefix.\clauset$.
The \emph{QCDCL clause} $C$ of 
QCDCL assignment
$A$ is defined by $C = (\bigvee_{l
\in A} \bar l)$. 
The \emph{QCDCL cube} $C$ of 
QCDCL assignment
$A$ is defined by $C = (\bigwedge_{l
\in A} l)$.
\end{definition}
By Definition~\ref{def:assign}, a QCDCL clause or cube cannot
contain complementary literals $x$ and $\bar x$ of some variable $x$. 
According to QCDCL assignments, we split QCDCL clauses and cubes into 
decision literals and literals assigned by unit and pure literal detection.
Let $C$ be a QCDCL clause or QCDCL cube. 
Then $C = C' \cup C''$ where $C'$ is the maximal subset of $C$
such that $X_1 \cup \ldots \cup X_{i-1} \subset \lit{C'}$ 
and $C' \cap X_{i} \not= \emptyset$. The literals in $C'$ are 
the first $|C'|$ consecutive variables of $\prefix$ 
which are assigned, 
i.e., $C'$ contains all the variables in $C$ assigned as decisions.\footnote{$C'$ can also contain literals assigned by pure/unit literal detection, 
but as they are left to the maximal decision variable in the prefix, 
we treat them like decision variables. }
The literals in $C''$ are assigned due to pure and unit literal detection and may occur anywhere 
in $\prefix$ starting from $X_{i+1}$.
Further we define $\dec{C} = C'$ and $\der{C} = C''$. 
We review 
\emph{generalized model generation}~\cite{DBLP:conf/lpar/LonsingBBES15} as an
axiom to derive \nolinebreak cubes.

\begin{definition}[Generalized Model Generation~\cite{DBLP:conf/lpar/LonsingBBES15}]\label{def_generalized_model_gen}
Given a PCNF $\qclauset$ and a QCDCL assignment $A$
according to Definition~\ref{def:assign}.
If $\qclauset[A]$ is satisfiable, then
the QCDCL cube $C = (\bigwedge_{l
\in A} l)$ is obtained by
\emph{generalized model generation}.
\end{definition}

\begin{theorem}[\cite{DBLP:conf/lpar/LonsingBBES15}] \label{thm_generalized_model_gen}
Given PCNF $\qclauset = \prefix.\clauset$
and
a QCDCL cube $C$ obtained from $\qclauset$ by generalized model
generation. Then it holds that
$\prefix.\clauset \satequiv \prefix.(\clauset \vee C)$.
\end{theorem}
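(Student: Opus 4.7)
The plan is to establish both directions of the sat-equivalence $\prefix.\clauset \satequiv \prefix.(\clauset \vee C)$.

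For the forward direction, satisfiability of $\prefix.\clauset$ implies satisfiability of $\prefix.(\clauset \vee C)$, I would simply note that $\clauset$ propositionally entails $\clauset \vee C$; hence any Skolem strategy realising $\prefix.\clauset$ also realises $\prefix.(\clauset \vee C)$, because the induced assignment makes $\clauset$, and therefore $\clauset \vee C$, true for every universal play.

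For the converse, assume $\prefix.(\clauset \vee C)$ is satisfiable with a Skolem strategy $\sigma$. By Definition~\ref{def_generalized_model_gen}, $\qclauset[A]$ is also satisfiable; let $\sigma_A$ be a Skolem strategy for its existential variables, i.e., those in $\lit{\prefix} \setminus \lit{A}$. I would then combine $\sigma$, $\sigma_A$, and the constant values of $A$ into a new Skolem strategy $\sigma^*$ for $\prefix.\clauset$. For each existential $x$ and universal play $\tau$, define $\sigma^*(x)$ as follows: if $\tau$ restricted to universals of $\lit{A}$ preceding $x$ agrees with $A$ (we are still on the \emph{$A$-path}), play $A(x)$ when $x \in \lit{A}$ and $\sigma_A(x)$ otherwise; if $\tau$ has already deviated from $A$ at some earlier universal of $\lit{A}$, play $\sigma(x)$. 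This is a valid Skolem function because it depends only on $\tau$ restricted to the universals preceding $x$.

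To verify that $\sigma^*$ is winning, I would split on $\tau$. If $\tau$ agrees with $A$ on every universal in $\lit{A}$, the induced play matches $A$ on $\lit{A}$ and extends it by $\sigma_A$'s winning play on $\qclauset[A]$, so $\clauset$ is satisfied by the assumed satisfiability of $\qclauset[A]$. If $\tau$ deviates from $A$ at some first universal $u^* \in \lit{A}$, the literal of $C$ at $u^*$ is false, so $C$ is falsified under $\tau$; since $\sigma$ witnesses $\prefix.(\clauset \vee C)$, the assignment $\sigma(\tau) \cup \tau$ must then satisfy $\clauset$.

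The main obstacle I anticipate lies in the deviation case: $\sigma^*$ plays $A$-values and $\sigma_A$-values at existentials preceding $u^*$, whereas $\sigma$ may choose different values there, so the satisfaction of $\clauset$ by $\sigma(\tau) \cup \tau$ does not directly transfer to $\sigma^*(\tau) \cup \tau$. To close this gap I would exploit the structural properties of QCDCL assignments from Definition~\ref{def:assign}, namely that the decisions in $A'$ cover an initial segment of $\prefix$ and the literals in $A''$ are forced by QBCP from those decisions. Using this, I would either refine $\sigma^*$ so that its pre-deviation portion coincides with $\sigma$ on every clause whose satisfaction under $\sigma(\tau) \cup \tau$ hinges on an early existential literal, or, equivalently, argue that $\sigma$ may be chosen without loss of generality to agree with $A \cup \sigma_A$ on the $A$-consistent region of the game tree, making the two strategies compatible at the hand-over point.
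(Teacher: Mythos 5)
Your forward direction is fine, and your strategy-composition setup for the converse is the natural one. You have also put your finger on the one genuine difficulty, the hand-over at the first deviation point, and that is indeed where all the content of the theorem lives: the statement is \emph{false} for arbitrary partial assignments. For instance, take $\exists e \forall u \exists f.\ (\bar u \vee f)\wedge(u\vee\bar f)\wedge(e\vee\bar f)\wedge(\bar e\vee f)$, which is unsatisfiable, and $A=\{u\}$ (not a QCDCL assignment, since $e$ is unassigned); then $\qclauset[A]$ is satisfiable, yet adding the cube $(u)$ disjunctively yields a satisfiable PCNF (set $e=0$). So any correct proof must use the structure of Definition~\ref{def:assign} exactly at the point you flag. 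The problem is that your proposal does not actually close that gap: your second fix, ``argue that $\sigma$ may be chosen without loss of generality to agree with $A\cup\sigma_A$ on the $A$-consistent region,'' is essentially a restatement of what has to be proved, and the first fix is not a workable argument as stated. Note also that the paper itself does not prove this theorem (it is imported from the LPAR'15 paper); it only sketches the dual clause version, and the ingredients of that sketch --- the split $C=\dec{C}\cup\der{C}$, the fact that the decision literals occupy an initial segment of $\prefix$, and the satisfiability-preservation of unit/pure literal detection --- are precisely what you need.

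Concretely, the gap is closed by descending the prefix through the decision segment $\lit{A'}$ rather than by patching a globally defined $\sigma$. Maintain an $A$-consistent position $p$ inside that initial segment with $\prefix.(\clauset\vee C)[p]$ satisfiable. At a universal decision $x$ with literal $l_x\in A$, the branch $p\cup\{\bar l_x\}$ falsifies $C$, so $(\clauset\vee C)[p\cup\{\bar l_x\}]=\clauset[p\cup\{\bar l_x\}]$ and that restriction of $\prefix.\clauset$ is satisfiable outright; the branch $p\cup\{l_x\}$ stays on the $A$-path and you recurse. At an existential decision $x$, either $\prefix.(\clauset\vee C)[p\cup\{l_x\}]$ is satisfiable (recurse along $A$) or the opposite branch is, in which case $C$ is again dead there and $\prefix.\clauset[p\cup\{\bar l_x\}]$ is satisfiable, so you stop. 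The mismatch you worried about never arises, because satisfiability of $\prefix.(\clauset\vee C)[\cdot]$ is only ever invoked at positions contained in the initial decision segment, where every preceding variable is fixed by $A$ itself and there are no free existential choices on which $\sigma$ could disagree. Finally, do \emph{not} treat the universals of $A''$ as further deviation points: once all of $A'$ has been played, conclude directly that $\qclauset[A']$ is satisfiable because QBCP preserves satisfiability and $\qclauset[A'\cup A'']=\qclauset[A]$ is satisfiable by hypothesis; this is the only place $\sigma_A$ and $\der{C}$ are needed. With these repairs your construction becomes the intended proof.
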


\begin{corollary}[\cite{DBLP:conf/lpar/LonsingBBES15}]
\label{cor:cube}
By Theorem~\ref{thm_generalized_model_gen}, a cube $C$ obtained from PCNF $\prefix. \clauset$ by generalized
model generation can be used as a learned cube in QCDCL.
\end{corollary}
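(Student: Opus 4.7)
The plan is to observe that this corollary is essentially a direct consequence of Theorem~\ref{thm_generalized_model_gen}, so the proof amounts to connecting satisfiability-equivalence to the soundness requirement of cube learning in QCDCL. Concretely, I would first recall from Section~\ref{sec:qcdcl} that when QCDCL learns a cube $C$ from $\prefix.\clauset$, the matrix is augmented disjunctively to obtain $\prefix.(\clauset \vee C)$, and the search continues on this modified PCNF. For this to be a correct learning step, what must hold is precisely $\prefix.\clauset \satequiv \prefix.(\clauset \vee C)$, since otherwise the solver could conclude satisfiability on a spurious cube.

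Second, I would invoke Theorem~\ref{thm_generalized_model_gen} directly: for any cube $C$ produced by generalized model generation from a QCDCL assignment $A$ with $\qclauset[A]$ satisfiable, we already have $\prefix.\clauset \satequiv \prefix.(\clauset \vee C)$. Combining the two observations, adding $C$ disjunctively to $\clauset$ is sound, which is the whole content of ``can be used as a learned cube.''

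Finally, I would briefly note that the cube from GMG integrates cleanly with the rest of the QCDCL learning phase: in the generalized \qrescalc workflow of Fig.~\ref{fig:qcdcl:gen:axioms}, the GMG step plays the role of the cube axiom~\ref{rule_cu_init}, so the resulting QCDCL cube can be combined with other cubes via~\ref{rule_res} and~\ref{rule_red} to produce the actual asserting learned cube used for backtracking. Nothing further is needed, because soundness of~\ref{rule_res} and~\ref{rule_red} on cubes is already established for \qrescalc.

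The only subtle point, and thus the main thing to make explicit, is that Definition~\ref{def:assigncube} ensures the conjunction $C = \bigwedge_{l \in A} l$ is a syntactically well-formed cube (no complementary literals, because $A$ is an assignment by Definition~\ref{def:assign}), so that axiom~\ref{rule_cu_init}'s side conditions are met and the derivation is legitimate within the calculus. Beyond that check, the corollary is immediate.
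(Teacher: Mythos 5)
Your proposal is correct and matches the paper, which states Corollary~\ref{cor:cube} as an immediate consequence of Theorem~\ref{thm_generalized_model_gen} without a separate proof: the satisfiability-equivalence $\prefix.\clauset \satequiv \prefix.(\clauset \vee C)$ is precisely the soundness condition for adding $C$ disjunctively as a learned cube in QCDCL. Your additional remarks on well-formedness of the QCDCL cube and its integration with rules~\ref{rule_res} and~\ref{rule_red} are consistent with the paper's discussion and do not change the argument.
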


Dual to generalized model generation, we define \emph{generalized conflict
generation} to derive clauses which can be added to a PCNF in a
satisfiability-preserving \nolinebreak way.

\begin{definition}[Generalized Conflict Generation]\label{def_generalized_conflict_gen}
Given a PCNF $\qclauset$ and a QCDCL assignment $A$  
according to Definition~\ref{def:assign}.
If $\qclauset[A]$ is unsatisfiable, then  
the QCDCL clause $C = (\bigvee_{l
\in A} \bar l)$ is obtained by 
\emph{generalized conflict generation}.
\end{definition}

\begin{theorem} \label{thm_generalized_conf_gen}
Given PCNF $\qclauset = \prefix.\clauset$  
and 
a QCDCL clause $C$ obtained from $\qclauset$ by generalized conflict
generation using QCDCL assignment $A$. Then it holds that 
$\prefix.\clauset \satequiv \prefix.(\clauset \wedge C)$.
\end{theorem}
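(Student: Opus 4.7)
The plan is to establish the two halves of $\satequiv$ separately. The easy half---$\prefix.(\clauset \wedge C)$ satisfiable implies $\prefix.\clauset$ satisfiable---is immediate, since any Skolem-function witness for the stronger matrix $\clauset \wedge C$ is \emph{a fortiori} a witness for the weaker matrix $\clauset$. The work lies in the converse, which I will attack via game semantics, using that a PCNF is satisfiable iff the existential player has a winning strategy.

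My first move will be to replace $A$ by its decision part $B := \dec{A}$. Since $\der{A}$ consists of unit/pure-literal propagations on $\qclauset[\dec{A}]$, QBCP preservation (recalled in Section~\ref{sec:prelim}) gives $\qclauset[B] \satequiv \qclauset[A]$, so $\qclauset[B]$ is again unsatisfiable. Writing $C_B := \bigvee_{l \in B}\bar l$, I have $C_B \subseteq C$ as sets of literals and hence $C_B \models C$, so it suffices to prove $\prefix.\clauset \satequiv \prefix.(\clauset \wedge C_B)$. The payoff of this reduction is that $\lit{B}$ covers a prefix-initial segment of quantifier blocks---all of $X_1,\ldots,X_{i-1}$ and part of $X_i$---by the characterisation of $\dec{\cdot}$ following Definition~\ref{def:assigncube}.

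For the main step I will fix a winning existential strategy $\sigma$ for $\prefix.\clauset$ and show that the same $\sigma$ also wins $\prefix.(\clauset \wedge C_B)$. For an arbitrary universal strategy $\tau$, the play $(\sigma,\tau)$ already satisfies $\clauset$, so I only need to see that it satisfies $C_B$. If $\tau$ plays $\bar l$ for some universal $l \in B$, then $C_B$ is satisfied by that deviating literal. Otherwise $\tau$ plays the universal part of $B$ exactly, and because $\lit{B}$ is an initial prefix-segment, at every existential $e \in \lit{B}$ the history up to $e$ lies entirely inside $\lit{B}$ and matches $B$, so $\sigma$'s move at $e$ is determined without reference to $\tau$'s plays on $\lit{\prefix}\setminus\lit{B}$. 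The crux is then to show that $\sigma$ cannot match $B$ on \emph{all} existential variables of $\lit{B}$ under this matching history: if it did, $\sigma$'s behaviour beyond $\lit{B}$ would constitute a winning existential strategy for $\qclauset[B]$, contradicting its unsatisfiability. Hence $\sigma$ must deviate at some $e^* \in \lit{B}$, contributing a literal of $C_B$ to the play.

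The hard part will be that crux step. A purely local inspection of a single play does not suffice, since $\sigma$ could satisfy $\clauset$ along some $B$-extending path without any contradiction arising from that path alone; what is genuinely needed is the prefix-initial-segment property of $\lit{B}$, which lets me read off a coherent existential strategy for $\qclauset[B]$ from $\sigma$'s commitments under all histories that match $B$, and thereby convert a hypothetical full $B$-extension by $\sigma$ into a Skolem witness for $\qclauset[B]$ that contradicts the assumed unsatisfiability.
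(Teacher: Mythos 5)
Your proof is correct and follows essentially the same route as the paper's (sketched) proof: both rely on $\dec{A}$ occupying a prefix-initial segment and on QBCP preserving satisfiability equivalence, so that any model of $\prefix.\clauset$ must deviate from $A$ at some existential decision and thereby satisfy $C$. Your reorganization---first strengthening $C$ to the subclause over $\dec{A}$ and transferring unsatisfiability from $\qclauset[A]$ to $\qclauset[\dec{A}]$, then running the strategy-restriction argument---is just a more rigorous, game-semantic rendering of the paper's informal ``there have to be other decisions'' step.
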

\begin{proof}[Sketch]
We argue that if $\prefix.\clauset$ is satisfiable, 
so is $\prefix.(\clauset \wedge C)$. The case for unsatisfiability  is trivial. 
Let $C = C' \cup C''$ 
with $C' = \dec{C}$ and $C'' = \der{C}$. Further, let 
$A = A' \cup A''$ such that $\lit{A'} = \lit{C'}$ and 
$\lit{A''} = \lit{C''}$.
Now assume that $\prefix.\clauset$ is satisfiable, 
but $\prefix.(\clauset \wedge C)$ is not. In order to falsify $C$, 
its subclause $C'$ has to be falsified, i.e., the first $|C'|$ variables 
of $\prefix$ have to be set according to $A'$. Then, due to pure and 
unit, also $C''$ is falsified, and therefore, each assignment falsifying 
$C$ has to contain $A$. But $\prefix.\clauset[A]$ is unsatisfiable. 
Since $\prefix.\clauset$ is satisfiable, there have to be other
decisions than the decisions of $A$ to show its satisfiability, but these also satisfy 
$\prefix.(\clauset \wedge C)$.
\qed
\end{proof}

\begin{corollary}
\label{cor:clause}
By Theorem~\ref{thm_generalized_conf_gen}, a clause 
$C$ obtained from PCNF $\prefix.\clauset$ by generalized
conflict generation can be used as a learned clause in QCDCL.
\end{corollary}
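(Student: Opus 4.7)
The plan is to derive this corollary as an immediate consequence of Theorem~\ref{thm_generalized_conf_gen} combined with the semantic criterion under which QCDCL accepts a new clause as a learned clause. The guiding principle, used throughout QCDCL, is that a clause $C$ may be conjoined to $\clauset$ during the search precisely when $\prefix.\clauset \satequiv \prefix.(\clauset \wedge C)$, because this is the sole requirement needed to ensure that the enlarged formula still has the same answer as the original. This is exactly what Theorem~\ref{thm_generalized_conf_gen} supplies for any $C$ produced by generalized conflict generation, so the corollary will follow without any genuinely new argument.

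First I would verify that the clause $C = (\bigvee_{l \in A} \bar l)$ from Definition~\ref{def_generalized_conflict_gen} is syntactically admissible as a clause of \qrescalc. Since $A$ is a QCDCL assignment, Definition~\ref{def:assign} forces $A$ to contain at most one literal per variable, so $C$ does not contain a complementary pair $\{x, \bar x\}$ and therefore satisfies the side condition imposed on clauses by the \qrescalc rules~\ref{rule_res}, \ref{rule_red}, and~\ref{rule_cl_init}. After this syntactic check, invoking Theorem~\ref{thm_generalized_conf_gen} directly yields the desired equivalence $\prefix.\clauset \satequiv \prefix.(\clauset \wedge C)$, which is what justifies adding $C$ to the working matrix.

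There is essentially no hard step: the corollary is a packaging statement that turns the semantic guarantee of Theorem~\ref{thm_generalized_conf_gen} into the operational claim needed for the QCDCL workflow of Figure~\ref{fig:qcdcl:gen:axioms}. The only subtle point, worth an explicit remark rather than an appeal to intuition, is the well-formedness of $C$ sketched above; once that is in place, $C$ can be used in backtracking, subsequent QBCP over $\clauset \wedge C$, and further \qrescalc derivations exactly like a clause produced by the traditional axiom~\ref{rule_cl_init}, so no additional machinery is required.
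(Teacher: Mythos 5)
Your proposal is correct and matches the paper's treatment: the corollary is stated as an immediate consequence of Theorem~\ref{thm_generalized_conf_gen}, whose satisfiability-equivalence guarantee is exactly the soundness condition for adding $C$ as a learned clause, and the paper gives no further proof. Your extra remark on well-formedness (no complementary literals in $C$) is the same observation the paper makes in the text following Definition~\ref{def:assigncube}, so nothing genuinely new is added or missing.
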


Based on Corollaries~\ref{cor:cube} and~\ref{cor:clause}, we formulate
axioms to derive learned clauses (cubes) from QCDCL assignments $A$ under
which the PCNF $\qclauset$ is (un)satisfiable.

\begin{definition}[Generalized Axioms] \label{def_gen_axioms}
Let $\qclauset = \prefix.\clauset$ be a PCNF. The \emph{generalized clause} and \emph{cube
axioms} are as follows. 
\begin{align}\tag{$\operatorname{\emph{gen-cl-init}}$}
\AxiomC{\phantom{A}}
\UnaryInfC{C}
\label{rule_gen_cl_init}
\DisplayProof
\quad
\begin{minipage}{0.7\textwidth}
$A$ is a QCDCL assignment,\\ 
$\qclauset[A]$ is unsatisfiable, \\
and $C = (\bigvee_{l \in A} \bar l)$ is a QCDCL clause
\end{minipage} 
\end{align}

\vspace{-0.25cm}

\begin{align}\tag{$\operatorname{\emph{gen-cu-init}}$}
\AxiomC{\phantom{A}}
\UnaryInfC{C}
\label{rule_gen_cu_init}
\DisplayProof
\quad
\begin{minipage}{0.7\textwidth}
$A$ is a QCDCL assignment,\\ 
$\qclauset[A]$ is satisfiable, \\
and $C = (\bigwedge_{l \in A} l)$ is a QCDCL cube
\end{minipage} 
\end{align}

\end{definition}

The generalized axioms~\ref{rule_gen_cl_init} and~\ref{rule_gen_cu_init} are
added to \qrescalc in addition to the traditional axioms~\ref{rule_cl_init}
and~\ref{rule_cu_init} from Definition~\ref{def_qres_calculus}.

\begin{example}
\label{ex_better_cube_proof}
Consider the PCNF from Example~\ref{ex_bad_cube_proof}. Let $A := \{\bar z,
\bar z'\}$ be a QCDCL 

{\setlength{\parindent}{0pt}
\begin{minipage}{0.80\textwidth}
assignment where $z$ and $z'$ are assigned as
decisions. \nolinebreak The \nolinebreak PCNF $\qclauset[A] = \forall u \exists y. (u \vee \bar y)
\wedge (\bar u \vee y)$ is satisfiable. We apply \nolinebreak \mbox{axiom} \ref{rule_gen_cu_init}
to derive the cube $C := (\bar z \wedge \bar z')$ and finally the empty cube
$\ER(C) = \emptyset$ (proof shown on the right).
\end{minipage}
\hfill
\begin{minipage}{0.1\textwidth}
\flushright
\AxiomC{$\phantom{B}$}
\UnaryInfC{$\bar z \wedge \bar z'$}
\UnaryInfC{$\emptyset$}
\DisplayProof

\medskip

\hfill $\Diamond$
\end{minipage}
}
\end{example}

In contrast to axioms~\ref{rule_cl_init} and~\ref{rule_cu_init} (the latter
corresponds to \emph{model
generation}~\cite{DBLP:journals/jair/GiunchigliaNT06}), the generalized axioms
allow to derive clauses that are not part of the given PCNF
$\qclauset$ and cubes that are not implicants of the matrix of $\qclauset$.

Given the empty assignment $A = \{\}$ and a PCNF $\qclauset$, the empty clause
or cube can be derived using $A$ by axioms~\ref{rule_gen_cl_init}
or~\ref{rule_gen_cu_init} right away if $\qclauset[A]$ is unsatisfiable or
satisfiable, respectively. However, checking the satisfiability of the PCNF
$\qclauset[A]$ as required in the side conditions of the generalized axioms is
PSPACE-complete. Therefore, \emph{in practice} it is necessary to consider
non-empty QCDCL assignments $A$ and apply either complete approaches in a
bounded way, like the successful expansion-based
approaches~~\cite{DBLP:conf/fmcad/AyariB02,DBLP:conf/sat/Biere04a,DBLP:journals/jair/HeuleJLSB15,Janota20161}, 
 or incomplete polynomial-time procedures, e.g., as used in
preprocessing~\cite{DBLP:journals/jair/HeuleJLSB15}, to check the
satisfiability of $\qclauset[A]$. \emph{Sign
abstraction}~\cite{DBLP:conf/tableaux/Letz02} can be regarded as a first
approach towards more powerful cube learning as formalized by axiom~\ref{rule_gen_cu_init}.

Axioms~\ref{rule_gen_cl_init} and~\ref{rule_gen_cu_init} 
 provide a formal framework for combining Q-resolution in \qrescalc with \emph{any} QBF decision
procedure $\mathcal{D}$ by using $\mathcal{D}$ to check
$\qclauset[A]$. This framework also applies to related combinations of search-based QBF solving
with variable elimination~\cite{DBLP:conf/mbmv/ReimerPSB12}. 
Regarding proof complexity, decision procedures
like expansion and Q-resolution
are incomparable as the lengths of proofs they are
able to produce for certain 
PCNFs differ by an exponential factor~\cite{DBLP:conf/sat/BalabanovWJ14,beyersdorff_et_al:LIPIcs:2015:4905,DBLP:journals/tcs/JanotaM15}. 
Due to this property, the combination of incomparable procedures in \qrescalc
via the generalized axioms allows to
benefit from their individual strengths.  
For example, the use of
expansion to check the satisfiability of $\qclauset[A]$ in
axioms~\ref{rule_gen_cl_init} and~\ref{rule_gen_cu_init} results in a variant
of \qrescalc which is exponentially stronger than traditional \qrescalc. For
satisfiable PCNFs, QBCE, originally a preprocessing technique to eliminate
redundant clauses in a PCNF, was shown to be effective to solve $\qclauset[A]$
for applications of axiom~\ref{rule_gen_cu_init}~\cite{DBLP:conf/lpar/LonsingBBES15}, resulting
in an exponentially stronger cube proof system.

If a decision procedure $\mathcal{D}$ is applied as a black box to check
$\qclauset[A]$, then \qrescalc extended by~\ref{rule_gen_cl_init}
and~\ref{rule_gen_cu_init} is not a proof system as defined by Cook and
Reckhow~\cite{DBLP:journals/jsyml/CookR79} because the final proof $P$ of
$\qclauset$ cannot be checked in polynomial time. However, $\mathcal{D}$ can
be augmented to return a proof $P'$ of $\qclauset[A]$ for every application
of~\ref{rule_gen_cl_init} and~\ref{rule_gen_cu_init}. Such proof $P'$ may be
formulated, e.g., in the QRAT proof system~\cite{DBLP:conf/cade/HeuleSB14}.
Finally, the proof $P$ of $\qclauset$ contains subproofs $P'$, all of which can
be checked in polynomial time, like $P$ itself (the size of $P$ may blow up exponentially in the worst case 
depending on the decision procedures that are used to produce the subproofs $P'$).

The QCDCL framework (Fig.~\ref{fig:qcdcl:gen:axioms}) readily supports
applications of the generalized axioms~\ref{rule_gen_cl_init}
and~\ref{rule_gen_cu_init}. A clause (resp.~cube) $C$ derived by these
axioms is first reduced by universal (resp.~existential) reduction to
obtain a reduced clause (cube) $C' \subseteq C$. Then $C'$ is used to derive
an asserting learned clause (cube) \emph{in the same way} as in clause
learning by traditional \qrescalc (Definition~\ref{def_qres_calculus}).

\section{An Abstraction-Based Clause Axiom}
\label{sec:abs:axiom}

Axioms~\ref{rule_gen_cl_init} and~\ref{rule_gen_cu_init} by
Definition~\ref{def_gen_axioms} are based on QCDCL assignments, where decision
variables have to be assigned in prefix ordering. To overcome the order
restriction, we introduce a clause axiom which allows to derive clauses based
on an abstraction of a PCNF and \emph{arbitrary} assignments.

\begin{definition}[Existential Abstraction]
\label{def_eabs}
Let $\qclauset = \prefix.\clauset$ be a PCNF with prefix $\prefix :=
    Q_1X_1 Q_2X_2 \ldots Q_nX_n$ and matrix $\clauset$. The \emph{existential
      abstraction} $\eabs{\qclauset} := \prefix'\!.\clauset$ of $\qclauset$ has 
prefix $\prefix' := \exists (X_1 \cup X_2 \cup  \ldots \cup X_n)$.
\end{definition}
\begin{lemma}\label{lem:abs:prop:model:preserving}
Let $\qclauset = \prefix.\clauset$ be a PCNF, $\eabs{\qclauset}$ its
existential abstraction, and $A$ a partial assignment of the variables in
$\eabs{\qclauset}$.  If 
$\eabs{\qclauset}[A]$ is unsatisfiable then ${\clauset} \equiv
{\clauset} \wedge (\bigvee_{l \in A} \bar l)$. 
\end{lemma}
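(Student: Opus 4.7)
The plan is to reduce the claim to a statement purely about propositional models of the matrix $\clauset$, using the observation that the existential abstraction $\eabs{\qclauset}$ has the same matrix as $\qclauset$ and only existential quantifiers. Hence the satisfiability of $\eabs{\qclauset}$ (and of $\eabs{\qclauset}[A]$ for any assignment $A$) coincides with the propositional satisfiability of its CNF matrix.

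The direction $\clauset \wedge (\bigvee_{l \in A} \bar l) \models \clauset$ is immediate, so I would focus on the nontrivial inclusion $\clauset \models \clauset \wedge (\bigvee_{l \in A} \bar l)$. For this, I would take an arbitrary propositional model $M$ of $\clauset$ and argue by contradiction: suppose $M$ does not satisfy $(\bigvee_{l \in A} \bar l)$, i.e., $M$ agrees with $A$ on every literal $l \in A$. Then $M$ extends $A$, and the restriction of $M$ to the unassigned variables $\lit{\prefix} \setminus \lit{A}$ is a propositional model of the matrix of $\eabs{\qclauset}[A]$, which is obtained from $\clauset$ by removing satisfied clauses and the complements of literals in $A$. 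Since $\eabs{\qclauset}[A]$ is purely existential, this model witnesses its satisfiability, contradicting the hypothesis.

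The routine but slightly delicate part is the bookkeeping about the semantics of $\qclauset[A]$ (clause deletion, literal deletion, prefix update from Section~\ref{sec:prelim}) and checking that, for a purely existential prefix, this simplification is equivalent to conditioning the CNF on $A$. With that dictionary in place, the equivalence of "$M \supseteq A$ is a propositional model of $\clauset$" and "$M \setminus A$ is a propositional model of the matrix of $\eabs{\qclauset}[A]$" is direct.

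The main obstacle, as far as there is one, is simply making sure no quantifier structure sneaks back into the argument: the lemma would fail if we replaced $\eabs{\qclauset}[A]$ by $\qclauset[A]$, since unsatisfiability of a genuinely quantified formula under $A$ does not preclude a propositional assignment extending $A$ that satisfies the matrix. The existential abstraction is precisely what collapses QBF-unsatisfiability to propositional unsatisfiability, making the contrapositive argument go through. No induction or resolution machinery is needed; the proof is a short semantic argument on propositional models of $\clauset$.
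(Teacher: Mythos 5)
Your proposal is correct and follows essentially the same route as the paper's proof: both directions are handled identically, with the nontrivial inclusion argued by contradiction via $A \subseteq M$ and the observation that unsatisfiability of the purely existential $\eabs{\qclauset}[A]$ coincides with propositional unsatisfiability of $\clauset[A]$, which no model of $\clauset$ extending $A$ can survive. The paper's version is just a terser rendering of the same semantic argument.
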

\begin{proof}
Obviously, every model $M$ of ${\clauset} \wedge (\bigvee_{l \in A}
\bar l)$ is also a model of ${\clauset}$. To show the other direction,
let $M$ be a model of $\clauset$, but (${\clauset} \wedge (\bigvee_{l \in A}
\bar l))[M] = \false$. Then $A \subseteq M$. Since 
$\eabs{\qclauset}[A]$ is unsatisfiable, also $\clauset[A]$ is 
unsatisfiable. Then $M$ cannot be a model of $\clauset$.
\qed
\end{proof}

\begin{theorem}[cf.~\cite{DBLP:conf/cp/SamulowitzB05,DBLP:conf/cp/SamulowitzDB06}] 
\label{thm_abs_conf_gen_correctness}
For a PCNF $\qclauset = \prefix.\clauset$, $\eabs{\qclauset}$ its
existential abstraction, and a partial assignment $A$ of the variables in
$\eabs{\qclauset}$ such that $\eabs{\qclauset}[A]$ is unsatisfiable, it holds
that $\prefix.\clauset \satequiv \prefix.(\clauset \wedge (\bigvee_{l \in A} \bar l))$. 
\end{theorem}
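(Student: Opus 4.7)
The plan is to reduce the statement to Lemma~\ref{lem:abs:prop:model:preserving} and then argue that propositional equivalence of matrices lifts to equivalence of the surrounding PCNFs under the same prefix~$\prefix$.

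First, since $\eabs{\qclauset}[A]$ is unsatisfiable by assumption, Lemma~\ref{lem:abs:prop:model:preserving} yields $\clauset \equiv \clauset \wedge (\bigvee_{l \in A} \bar l)$ as propositional CNFs, i.e.\ the two matrices have exactly the same set of satisfying total assignments over $\lit{\prefix}$. This is the substantive content of the argument; everything else is just propagating this matrix equivalence through the quantifier prefix.

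Next I would observe that the semantics of a PCNF $\prefix.\mathcal{M}$ is determined, by a straightforward induction on the length of $\prefix$, purely by the restriction of $\mathcal{M}$ (as a Boolean function on $\lit{\prefix}$) to total assignments. Concretely, if $Q_1 = \exists$, then $\prefix.\mathcal{M}$ is satisfiable iff $\prefix.\mathcal{M}[\{x\}]$ or $\prefix.\mathcal{M}[\{\bar x\}]$ is, and dually for $Q_1 = \forall$; unrolling this recursion reduces truth of the PCNF to the truth values of $\mathcal{M}$ on the leaves of the assignment tree. Hence if $\mathcal{M}_1 \equiv \mathcal{M}_2$, then $\prefix.\mathcal{M}_1$ and $\prefix.\mathcal{M}_2$ agree in truth value under the identical prefix~$\prefix$, and in particular $\prefix.\mathcal{M}_1 \satequiv \prefix.\mathcal{M}_2$.

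Applying this observation with $\mathcal{M}_1 = \clauset$ and $\mathcal{M}_2 = \clauset \wedge (\bigvee_{l \in A} \bar l)$, using the equivalence from the first step, gives $\prefix.\clauset \satequiv \prefix.(\clauset \wedge (\bigvee_{l \in A} \bar l))$, as required.

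The only real obstacle is the first step (the lemma), which has already been handled; the lifting argument in the second step is essentially bookkeeping, and the ``easy'' direction $\prefix.(\clauset \wedge (\bigvee_{l \in A} \bar l)) \Rightarrow \prefix.\clauset$ holds trivially because adding a clause can only shrink the set of matrix models. \qed
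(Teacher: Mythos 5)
Your proof is correct and follows essentially the same route as the paper: both reduce the claim to Lemma~\ref{lem:abs:prop:model:preserving} and then lift the resulting logical equivalence of the matrices to satisfiability equivalence of the PCNFs under the unchanged prefix. The only difference is that the paper discharges the lifting step by citing~\cite{DBLP:conf/cp/SamulowitzB05,DBLP:conf/cp/SamulowitzDB06}, whereas you sketch the (correct) induction on the prefix directly.
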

\begin{proof}
By Lemma~\ref{lem:abs:prop:model:preserving}, ${\clauset}$ and
${\clauset} \wedge (\bigvee_{l \in A} \bar l)$ have the same sets of
propositional models. As argued in the context of SAT-based QBF
solving~\cite{DBLP:conf/cp/SamulowitzB05} and QBF
preprocessing~\cite{DBLP:conf/cp/SamulowitzDB06}, model-preserving
manipulations of the matrix of a PCNF result in a satisfiability-equivalent
PCNF.\footnote{In fact, a stronger result is proved
in~\cite{DBLP:conf/cp/SamulowitzDB06}: model-preserving manipulations of the
matrix of a PCNF result in a PCNF having the same set of \emph{tree-like QBF
models}.}
\qed
\end{proof}

\begin{definition}[Abstraction-Based Conflict Generation]\label{def_abs_conf_gen}
Given a PCNF $\qclauset$, its existential abstraction $\eabs{\qclauset}$ and
an assignment $A$ (not necessarily being a QCDCL assignment).  
If $\eabs{\qclauset}[A]$ is unsatisfiable, then  
the clause $C = (\bigvee_{l
\in A} \bar l)$ is obtained by 
\emph{abstraction-based conflict generation}.
\end{definition}

We formulate a new axiom to derive clauses by abstraction-based conflict
generation, which can be used as ordinary learned clauses in QCDCL
(Theorem~\ref{thm_abs_conf_gen_correctness}).

\begin{definition}[Abstraction-Based Clause Axiom] \label{def_abs_gen}
For a PCNF $\qclauset = \prefix.\clauset$ and $\eabs{\qclauset}$ by
Definition~\ref{def_eabs}, the \emph{abstraction-based clause axiom} is as
follows:
\begin{align}\tag{$\operatorname{\emph{abs-cl-init}}$}
\AxiomC{\phantom{A}}
\UnaryInfC{C}
\label{rule_abs_cl_init}
\DisplayProof
\quad
\begin{minipage}{0.7\textwidth}
$A$ is an assignment,\\ 
$\eabs{\qclauset}[A]$ is unsatisfiable, \\
and $C = (\bigvee_{l \in A} \bar l)$ is a clause
\end{minipage} 
\end{align}
\end{definition}

Axiom~\ref{rule_abs_cl_init} can be added to \qrescalc in addition to all the
other axioms. In the side condition of axiom~\ref{rule_abs_cl_init}, the
propositional CNF $\eabs{\qclauset}[A]$ has to be solved, which naturally can
be carried out by integrating a SAT solver in QCDCL. SAT solving has been
applied in the context of QCDCL to derive
learned clauses~\cite{DBLP:conf/cp/SamulowitzB05} and to 
overcome the ordering of the prefix of a PCNF. Further, many QBF
solvers rely on  SAT
solving~\cite{DBLP:conf/ijcai/JanotaM15,Janota20161,DBLP:conf/fmcad/RabeT15,DBLP:conf/sat/TuHJ15}.
Integrating axiom~\ref{rule_abs_cl_init} in \qrescalc by
Definition~\ref{def_qres_calculus} results in a variant of \qrescalc which is
exponentially stronger than traditional \qrescalc, as illustrated by the
following example.

\begin{example}
\label{ex_hkb}
Consider the following family $(\hkb)_{t \geq 1}$ of PCNFs defined by Kleine Büning et
al.~\cite{DBLP:journals/iandc/BuningKF95}.  
A formula $\hkb$ in $(\hkb)_{t \geq 1}$  
 has the quantifier prefix 
$$
\exists d_0 d_1 \eVar_1 \forall \uVar_1 \exists d_2 \eVar_2 \forall \uVar_2 \exists
d_3 \eVar_3\ldots \forall \uVar_{t-1} \exists d_t \eVar_t \forall \uVar_t \exists
f_{1}\ldots f_{t}
$$
and a matrix consisting of the following clauses:
\[
\begin{array}{lclclclcl}
\clause_0      & :=  & \overline{d}_0 &  & 
\clause_1      & := & d_0 \lor \overline{d}_1 \lor \eVarNeg_1 \\
\clause_{2j}   & := & d_j \lor \uVarNeg_j \lor \overline{d}_{j+1} 
\lor \eVarNeg_{j+1} & & 
\clause_{2j+1} & := & e_j \lor \uVar_j \lor \overline{d}_{j+1} 
\lor \eVarNeg_{j+1}
&  &
	\text{for } 1 \leq j < t\\
\clause_{2t}   & := & d_t \lor \uVarNeg_{t} \lor \overline{f}_{1}\lor \ldots  
\lor \overline{f}_{t}
& &  
\clause_{2t+1} & := & \eVar_t \lor \uVar_{t} \lor \overline{f}_{1}\lor \ldots  \lor 
\overline{f}_{t}\\
B_{2j-1} & := & \uVar_j \lor f_j & & 
B_{2j}   & := &  \uVarNeg_j \lor f_{j} & & \text{for } 1 \leq j \leq t
\end{array}
\]
The size of every clause resolution proof of $\hkb$ in traditional \qrescalc
(Definition~\ref{def_qres_calculus}) is exponential in
$t$~\cite{beyersdorff_et_al:LIPIcs:2015:4905,DBLP:journals/iandc/BuningKF95}. We
show that \qrescalc with axiom~\ref{rule_abs_cl_init} allows to generate
proofs of $\hkb$ which are polynomial in $t$.  To this end, we
apply~\ref{rule_abs_cl_init} to derive unit clauses $(f_j)$ for all
existential variables $f_j$ in $\hkb$ using assignments $A := \{ \bar f_{j}
\}$, respectively.  Since $\eabs{\hkb}[A]$ contains complementary unit
clauses $(\uVar_j)$ and $(\uVarNeg_j)$ resulting from the clauses $B_{2j-1}$
and $B_{2j}$ in $\hkb$, the unsatisfiability of $\eabs{\hkb}[A]$ can be
determined in polynomial time without invoking a SAT~solver.  The derived unit clauses $(f_j)$ are resolved
with clauses $\clause_{2t}$ and $\clause_{2t+1}$ to produce further unit
clauses $(d_t)$ and $(e_t)$ after universal reduction. This process continues
with $\clause_{2j}$ and $\clause_{2j+1}$ until the empty clause is
derived using $C_0$ and $C_1$. \hfill $\Diamond$
\end{example}

Abstraction-based failed literal detection~\cite{DBLP:conf/sat/LonsingB11},
where certain universal quantifiers of a PCNF are treated as existential ones,
implicitly relies on \emph{QU-resolution}. QU-resolution allows universal
variables as pivots in rule~\ref{rule_res} and can generate the same proofs of
$(\hkb)_{t \geq 1}$ as
in Example~\ref{ex_hkb}~\cite{DBLP:conf/cp/Gelder12}. Applying
axiom~\ref{rule_abs_cl_init} for clause learning in QCDCL harnesses the power
of SAT solving. Furthermore, the combination of \qrescalc
(Definition~\ref{def_qres_calculus}) and~\ref{rule_abs_cl_init}
polynomially simulates\footnote{We refer to the appendix of this paper with additional
  results.} QU-resolution, which has not been
applied systematically to learn clauses in QCDCL. Like with the axioms~\ref{rule_gen_cl_init}
and~\ref{rule_gen_cu_init}, clauses derived by axiom~\ref{rule_abs_cl_init}
can readily be used to derive asserting learned clauses in QCDCL.


\section{Case Study and Experiments}
\label{sec:exp}

\depqbf\footnote{DepQBF is free software: \url{http://lonsing.github.io/depqbf/}} is a QCDCL-based QBF solver
implementing the Q-resolution calculus as in
Definition~\ref{def_qres_calculus}. Since version 5.0, \depqbf
additionally applies the generalized cube axiom~\ref{rule_gen_cu_init}
based on dynamic blocked clause elimination
(QBCE)~\cite{DBLP:conf/lpar/LonsingBBES15}. The case where QBCE
reduces the PCNF $\qclauset[A]$ under the current assignment $A$ to the empty formula constitutes a
successful application of axiom~\ref{rule_gen_cu_init}.
\depqbf comes with a sophisticated analysis of variable dependencies in a
PCNF~\cite{DBLP:journals/jar/SamerS09} to relax their linear prefix ordering.
However, we disabled dependency analysis 
to focus the evaluation on axiom applications.
In the following, 
we evaluate the impact of (combinations of) the generalized 
axioms~\ref{rule_gen_cl_init} and~\ref{rule_gen_cu_init} and the
abstraction-based clause axiom~\ref{rule_abs_cl_init} in practice. 


\subsection{Axiom Applications in Practice}

In \depqbf, we attempt to apply the generalized axioms after QBCP has
saturated in QCDCL, i.e., before assigning a variable as decision. 
We integrated the preprocessor Bloqqer~\cite{DBLP:conf/cade/BiereLS11}
to detect applications of~\ref{rule_gen_cl_init}
and~\ref{rule_gen_cu_init}. Bloqqer implements techniques such as equivalence
reasoning, variable elimination, (variants of) QBCE, and expansion of
universal variables. Since these techniques are applied in bounded fashion,
Bloqqer can be regarded as an incomplete QBF solver. If the PCNF $\qclauset[A]$ is satisfiable
(unsatisfiable) and Bloqqer solves it, then a QCDCL cube (clause) is generated 
by axiom~\ref{rule_gen_cu_init} (\ref{rule_gen_cl_init}), which is used to derive a
learned cube (clause). Otherwise, QCDCL
proceeds as usual with assigning a decision variable. 
Bloqqer is explicitly provided with the entire PCNF $\qclauset[A]$ before each
call. To limit the resulting run time overhead in practice, Bloqqer is called
in intervals of $2^n$ decisions, where $n := 11$ in
our experiments. Further, Bloqqer is never called on PCNFs with more
than 500,000 original clauses, and it is disabled at run time if the average
time spent to complete a call exceeds 0.125 seconds.

To detect applications of the abstraction-based clause
axiom~\ref{rule_abs_cl_init}, we use the SAT solver
PicoSAT~\cite{DBLP:journals/jsat/Biere08} to check the satisfiability of the
existential abstraction $\eabs{\qclauset}[A]$ of the PCNF $\qclauset =
\prefix. \clauset$ under
the current QCDCL assignment $A$. The matrix $\clauset$ is imported to PicoSAT
once before the entire solving process starts. For each check of $\eabs{\qclauset}[A]$, the
QCDCL assignment $A$ is passed to PicoSAT via assumptions, and PicoSAT is
called incrementally. If $\eabs{\qclauset}[A]$ is unsatisfiable, then we try
to minimize the size of $A$ by extracting the set $A' \subseteq A$ of
\emph{failed assumptions}. Failed assumptions are those assumptions that were
relevant for the SAT solver to determine the unsatisfiability of
$\eabs{\qclauset}[A]$. Note that in general $A'$ is not a QCDCL assignment. It holds that $\eabs{\qclauset}[A']$ is unsatisfiable
and hence we derive the clause $C = (\bigvee_{l \in A'} \bar l)$ by
axiom~\ref{rule_abs_cl_init}.

In addition to Bloqqer and dynamic QBCE (which is part of
DepQBF~5.0~\cite{DBLP:conf/lpar/LonsingBBES15}) used to detect applications of
the generalized cube axiom~\ref{rule_gen_cu_init}, we implemented a
\emph{trivial truth}~\cite{DBLP:conf/aaai/CadoliGS98} test based on the
following abstraction.
\begin{definition}[Universal Literal Abstraction, cf.~\emph{Trivial Truth}~\cite{DBLP:conf/aaai/CadoliGS98}]
\label{def_univ_lit_abs}
Let $\qclauset = \prefix.\clauset$ be a PCNF. The \emph{universal
  literal abstraction} $\aabs{\qclauset} := \prefix'\!.\clauset'$ of
$\qclauset$ is obtained by removing all universal literals from all
the clauses in $\clauset$ and by removing all universal variables and
universal quantifiers from $\prefix$.
\end{definition}
\begin{lemma}[\cite{DBLP:conf/aaai/CadoliGS98}]
\label{lem_trivial_truth_correctness}
For a PCNF $\qclauset = \prefix.\clauset$, $\aabs{\qclauset}$, and a QCDCL assignment $A$ of variables in
$\aabs{\qclauset}$: if $\aabs{\qclauset}[A]$ is satisfiable, then $\qclauset[A]$ is satisfiable. 
\end{lemma}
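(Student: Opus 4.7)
The plan is to produce a QBF model of $\qclauset[A]$ by exhibiting constant Skolem functions for its existential variables, read off from a propositional model of the abstraction. Since $\aabs{\qclauset}$ contains only existential variables, the hypothesis that $\aabs{\qclauset}[A]$ is satisfiable means there is a propositional extension $A^*$ of $A$ to all remaining existentials such that every clause of $\aabs{\qclauset}[A]$ contains a literal set to true by $A^*$. I would declare this $A^*$ to be the candidate existential strategy and aim to show it satisfies $\qclauset[A]$ regardless of how the universal variables are fixed.

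The engine of the argument is a clause-by-clause correspondence between $\qclauset$ and $\aabs{\qclauset}$: by Definition~\ref{def_univ_lit_abs}, each clause $C \in \clauset$ has a counterpart $C' \in \clauset'$ obtained by dropping the universal literals of $C$, while $C$ and $C'$ agree on their existential literals. Because $A$ (and hence $A^*$) assigns only existential variables, applying the simplification $[A]$ to $\qclauset$ and to $\aabs{\qclauset}$ removes exactly the same clauses and deletes the same existential-literal occurrences on both sides; no universal literals are touched. Consequently, for every clause $C$ surviving in $\qclauset[A]$ there is a surviving counterpart $C' \in \aabs{\qclauset}[A]$ with the same set of existential literals. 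Since $A^*$ satisfies $C'$ through some existential literal $\ell$, the same $\ell$ also satisfies $C$ in $\qclauset[A]$.

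From here the conclusion is immediate: the existential values of $A^*$ satisfy every clause of $\qclauset[A]$ using existential literals alone, so the universal variables are irrelevant to truth, and assigning each existential variable of $\qclauset[A]$ its value under $A^*$ as a constant Skolem function yields a QBF model. Hence $\qclauset[A]$ is satisfiable. The only delicate point — really just a bookkeeping step rather than a genuine obstacle — is justifying the clause-by-clause correspondence after the simplification $[A]$; this rests entirely on the fact that $A$ assigns no universal variable, so simplification commutes with the removal of universal literals performed by $\aabs{\cdot}$. The QCDCL hypothesis on $A$ itself is not needed in the argument; any existential assignment would suffice.
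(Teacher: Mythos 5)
Your proof is correct. The paper itself gives no argument for this lemma---it is imported wholesale from Cadoli, Giovanardi, and Schaerf via the citation---so there is nothing to compare against; your write-up supplies the missing elementary justification. The key steps all check out: since $A$ assigns only variables of $\aabs{\qclauset}$, i.e.\ existential variables of $\qclauset$, a clause $C \in \clauset$ is removed by $[A]$ exactly when its universal-literal-free counterpart $C'$ is, and the surviving $C$ retains every existential literal of the surviving $C'$; hence the propositional model of $\aabs{\qclauset}[A]$ satisfies each remaining clause of $\qclauset[A]$ through an existential literal, and reading it off as constant Skolem functions makes the universal branching irrelevant, so $\qclauset[A]$ is satisfiable under the paper's recursive semantics. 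Your closing observation is also accurate on both counts: the clause-by-clause correspondence commutes with $[A]$ precisely because $A$ touches no universal variable (which the statement guarantees by restricting $A$ to variables of $\aabs{\qclauset}$), and the QCDCL structure of $A$ plays no role---the lemma holds for an arbitrary assignment of existential variables. The only reason the hypothesis appears at all is that in the solver $A$ is produced by QCDCL; it does no mathematical work here.
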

By Lemma~\ref{lem_trivial_truth_correctness}, we can check the side condition
of axiom~\ref{rule_gen_cu_init} whether $\qclauset[A]$ is satisfiable under a
QCDCL assignment $A$ by checking whether $\aabs{\qclauset}[A]$ is satisfiable.
To this end, we use a second instance of PicoSAT. Note that while
Definition~\ref{def_univ_lit_abs} corresponds to trivial truth, the
existential abstraction (Definition~\ref{def_eabs}) corresponds to
\emph{trivial falsity}~\cite{DBLP:conf/aaai/CadoliGS98}. Hence by axiom
applications, we apply trivial truth and falsity, which originate from purely
search-based QBF solving without learning, to derive clauses and cubes in
QCDCL.

Like Bloqqer, we call the two instances of PicoSAT to detect applications
of~\ref{rule_abs_cl_init} and~\ref{rule_gen_cu_init} in QCDCL before assigning
a decision variable. PicoSAT is called in intervals of $2^m$ decisions,
where $m := 10$. PicoSAT is never called on PCNFs with more than 500,000
original clauses, and it is disabled at run time if the average time spent to complete
a call exceeds five seconds. 

\subsection{Experimental Results}

The integration of Bloqqer and SAT solving to detect axiom applications
results in several variants of \depqbf. We use the letter code
``\depqbfOnlyDynQBCE-\{nQ$\mid$B$\mid$A$\mid$T\}'' to label the variants,
where ``DQ'' represents \depqbf 5.0 with dynamic QBCE used for
axiom~\ref{rule_gen_cu_init}~\cite{DBLP:conf/lpar/LonsingBBES15}. Variant
``nQ'' indicates that dynamic QBCE is disabled. Letters, ``B'', ``A'', and
``T'' represent the additional application of Bloqqer for
axioms~\ref{rule_gen_cl_init} and~\ref{rule_gen_cu_init}, SAT solving to check
the existential abstraction for axiom~\ref{rule_abs_cl_init}, and SAT solving
to carry out the trivial truth test for~\ref{rule_gen_cu_init}, respectively.

\begin{table}[t]
{
\begin{minipage}[b]{0.48\textwidth}
\caption{Preprocessing track. Solved
  instances (\emph{\#T}), solved unsatisfiable (\emph{\#U}) and satisfiable
  ones (\emph{\#S}), and total wall clock time in seconds including time outs.}
\centering
\begin{tabular}{l@{\enskip\enskip}r@{\enskip}r@{\enskip}r@{\enskip}r}
\hline
\emph{Solver} & \multicolumn{1}{c}{\emph{\#T}} & \multicolumn{1}{c}{\emph{\#U}} & \multicolumn{1}{c}{\emph{\#S}} & \multicolumn{1}{c}{\emph{Time}} \\
\hline
\rareqs & 107 & 44 & 63 & 255K \\
\depqbfNoQBCETFTT  & 105 & 46 & 59 & 266K \\
\qesto & 104 & 46 & 58 & 267K \\
\depqbfNoQBCE  & 101 & 44 & 57 & 271K \\
\depqbfDynTFTT & 99 & 45 & 54 & 273K \\
\depqbfDynBloqqerTFTT & 98 & 43 & 55 & 276K \\
\depqbfOnlyDynQBCE & 95 & 43 & 52 & 278K \\
\depqbfOnlyDynTF & 95 & 44 & 51 & 280K \\
\depqbfOnlyDynTT & 94 & 41 & 53 & 278K \\
\depqbfOnlyDynBloqqer & 94 & 42 & 52 & 284K \\
\qellc & 87 & 34 & 53 & 290K \\
\caqe & 74 & 24 & 50 & 319K \\
\ghostq & 61 & 18 & 43 & 338K \\
\hline
\end{tabular}
\label{tab:prepro:track}
\end{minipage}
\hfill
\begin{minipage}[b]{0.48\textwidth}
\caption{QBFLIB track. Same column headers as Table~\ref{tab:prepro:track}.}
\vspace*{0.83cm}
\centering
\begin{tabular}{l@{\enskip\enskip}r@{\enskip}r@{\enskip}r@{\enskip}r}
\hline
\emph{Solver} & \multicolumn{1}{c}{\emph{\#T}} & \multicolumn{1}{c}{\emph{\#U}} & \multicolumn{1}{c}{\emph{\#S}} & \multicolumn{1}{c}{\emph{Time}} \\
\hline
\ghostq & 139 & 62 & 77 & 265K \\
\depqbfDynTFTT & 110 & 58 & 52 & 314K \\
\depqbfDynBloqqerTFTT & 109 & 56 & 53 & 314K \\
\depqbfOnlyDynTT & 108 & 56 & 52 & 318K \\
\qellc & 106 & 48 & 58 & 320K \\
\depqbfOnlyDynTF & 106 & 58 & 48 & 321K \\
\depqbfOnlyDynQBCE & 105 & 57 & 48 & 326K \\
\depqbfOnlyDynBloqqer & 104 & 56 & 48 & 326K \\
\depqbfNoQBCETFTT & 88 & 49 & 39 & 352K \\
\depqbfNoQBCE  & 82 & 44 & 38 & 362K \\
\rareqs & 80 & 47 & 33 & 361K \\
\qesto & 73 & 46 & 27 & 378K \\
\caqe & 53 & 32 & 21 & 406K \\
\hline
\end{tabular}
\label{tab:qbflib:track}
\end{minipage}
}
\end{table}

For the empirical evaluation, we used the original benchmark sets from the QBF
Gallery
2014~\cite{gallery14}\footnote{\url{http://qbf.satisfiability.org/gallery/}}
preprocessing track (243 instances), QBFLIB track (276 instances), and
applications track (735 instances).  We compare the variants of \depqbf to
\rareqs~\cite{Janota20161} and \ghostq~\cite{DBLP:conf/sat/KlieberSGC10},
which showed top performance in the QBF Gallery 2014, and to the recent
solvers \caqe~\cite{DBLP:conf/fmcad/RabeT15}\footnote{The
authors~\cite{DBLP:conf/fmcad/RabeT15} provided us with an updated version
which we used in our tests.}, \qesto~\cite{DBLP:conf/ijcai/JanotaM15}, and
\qell~\cite{DBLP:conf/sat/TuHJ15}. We tested \qell with (\qellc) and without
(\qell-nc) exploiting circuit information and show only the results of the
better variant of the two in terms of solved instances.  All experiments
reported in the following were run on an AMD Opteron 6238 at 2.6 GHz under
64-bit Ubuntu Linux 12.04 with time and memory limits of 1800 seconds and 7
GB, respectively.

Tables~\ref{tab:prepro:track} to~\ref{tab:applications:no:bloqqer} illustrate
solver performance by solved instances and total wall clock time. For
\depqbf, the variant where only dynamic QBCE is applied (\depqbfOnlyDynQBCE)
is the baseline of the comparison. In the QBFLIB
(Table~\ref{tab:qbflib:track}) and applications track
(Table~\ref{tab:applications:no:bloqqer}), \depqbf with Bloqqer and SAT
solving for axioms~\ref{rule_gen_cl_init}, \ref{rule_gen_cu_init},
and~\ref{rule_abs_cl_init} solves substantially more instances than
\depqbfOnlyDynQBCE.

Disabling dynamic QBCE used for axiom~\ref{rule_gen_cu_init} (variants with
``nQ'' in the tables) results in a considerable performance decrease, except
in the preprocessing track (Table~\ref{tab:prepro:track}). There, dynamic QBCE
is harmful to the performance. We attribute this phenomenon to 
massive preprocessing, after which QBCE does not pay off. However, SAT
solving for axioms~\ref{rule_gen_cl_init} and~\ref{rule_gen_cu_init} is
crucial as the variant \depqbfNoQBCETFTT outperforms \depqbfNoQBCE without SAT
solving.

In general, combinations of dynamic QBCE, Bloqqer, and SAT solving (for
solving the existential abstraction and for testing trivial truth) outperform
variants where only one of these techniques is applied. Examples
are~\depqbfDynTFTT, \depqbfOnlyDynTF and \depqbfOnlyDynTT in
Table~\ref{tab:qbflib:track} and \depqbfDynBloqqerTFTT, \depqbfOnlyDynBloqqer,
\depqbfOnlyDynTF, and \depqbfOnlyDynTT in
Table~\ref{tab:applications:no:bloqqer}. The results in the applications track
are most pronounced, where six out of eight variants of DepQBF
outperform the other solvers (Fig.~\ref{plot:applications:no:bloqqer} shows
a related cactus plot of the run times). In the following we focus on the
applications track.


\begin{table}[t]
\begin{minipage}{0.42\textwidth}
\caption{Applications track. Same column headers as Table~\ref{tab:prepro:track}.}
\centering
    \begin{tabular}{l@{\enskip\enskip}c@{\enskip}c@{\enskip}c@{\enskip}c}
      \hline
\emph{Solver} & \emph{\#T} & \emph{\#U} & \emph{\#S} & \emph{Time} \\
      \hline
      \depqbfDynBloqqerTFTT & 466 & 236 & 230 & 553K \\
      \depqbfDynTFTT & 461 & 234 & 227 & 555K \\
      \depqbfOnlyDynTF & 459 & 237 & 222 & 561K \\
      \depqbfOnlyDynBloqqer & 449 & 222 & 227 & 563K \\
      \depqbfOnlyDynTT & 441 & 220 & 221 & 571K \\
      \depqbfOnlyDynQBCE & 441 & 224 & 217 & 575K \\
      \qellnc & 434 & 302 & 132 & 563K \\
      \rareqs & 414 & 272 & 142 & 611K \\
      \caqe & 370 & 192 & 178 & 708K \\
      \ghostq & 347 & 166 & 181 & 752K \\
      \qesto & 331 & 188 & 143 & 767K \\
     \depqbfNoQBCEDynBloqqerTFTT  & 293 & 140 & 153 & 848K \\
     \depqbfNoQBCE  & 279 & 127 & 152 & 880K \\
      \hline
    \end{tabular}
\label{tab:applications:no:bloqqer}
\end{minipage}
\hfill
\begin{minipage}{0.56\textwidth}
\centering
    \includegraphics[scale=0.63]{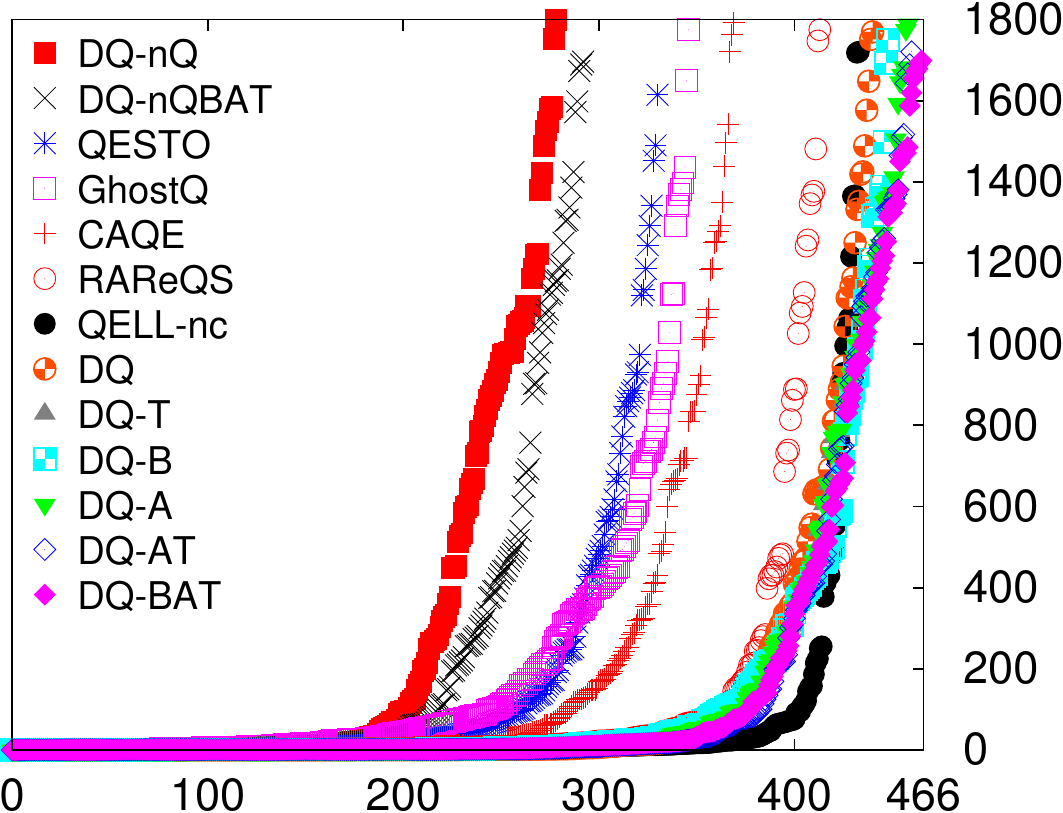} 
    \captionof{figure}{Sorted run times (y-axis) of instances (x-axis) related to Table~\ref{tab:applications:no:bloqqer}.}
    \label{plot:applications:no:bloqqer}
\end{minipage}
\end{table}

\begin{table}[t]
\begin{minipage}[c]{0.47\textwidth}
\caption{Related to variant \depqbfDynBloqqerTFTT in
  Table~\ref{tab:applications:no:bloqqer}: statistics on applications of
  Bloqqer (B), SAT solving for~\ref{rule_abs_cl_init} (A), and SAT solving to
  test trivial truth for~\ref{rule_gen_cu_init} (T) with respect to total solved
  instances (\emph{\#T}) and solved satisfiable (\emph{\#S}) and unsatisfiable
  ones (\emph{\#U}).}
\centering
    \begin{tabular}{lrrr}
\hline
 & \multicolumn{1}{c}{\emph{\#T}} & \multicolumn{1}{c}{\emph{\#S}} & \multicolumn{1}{c}{\emph{\#U}} \\
\hline
B tried: & 18559 & 12052 & 6507 \\
B success: & 18150 & 11946 & 6204 \\
B sat: & 10917 & 10405 & 512 \\
B unsat: & 7233 & 1541 & 5692 \\
\hline
T tried: & 241,180 & 88,623 & 152,557 \\
T success: & 20,494 & 19,276 &  1,218 \\
\hline
A tried: & 301,652 & 122,929 & 178,723 \\
A success: & 67,129 & 34,306 & 32,823 \\
\hline
    \end{tabular}
\label{tab:dyn:bloqqer:tt:tf:stats}
\end{minipage}
\hfill
\begin{minipage}[c]{0.47\textwidth}
\centering
\includegraphics[scale=0.48]{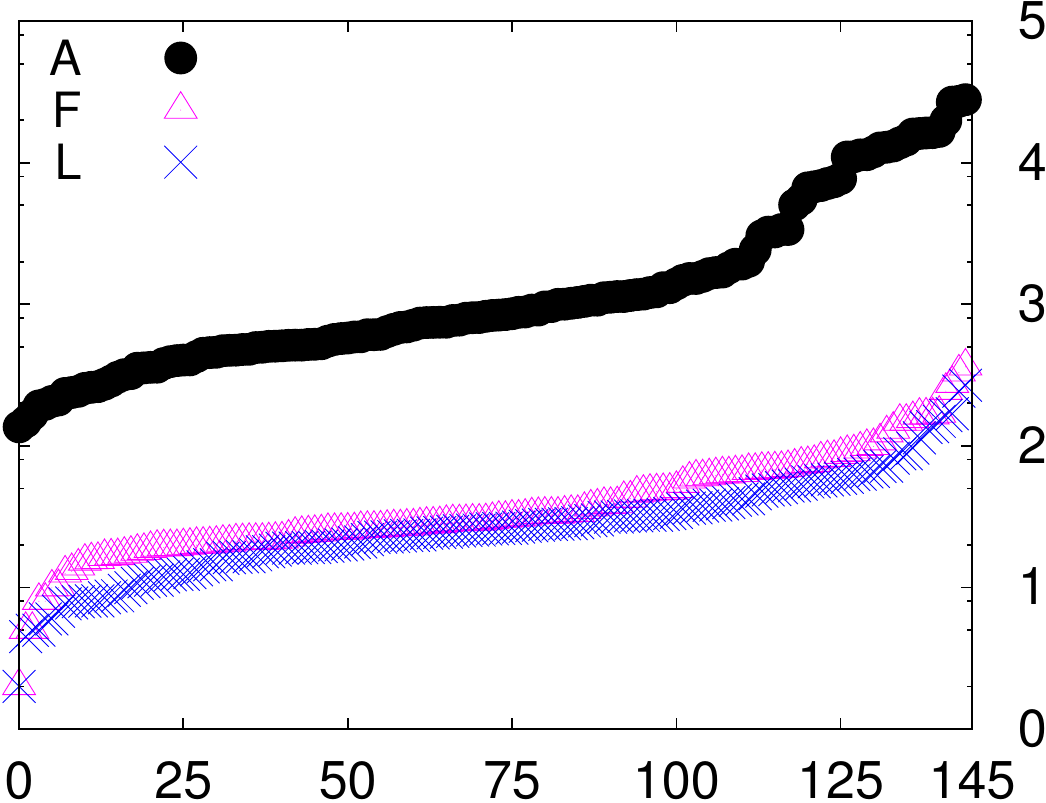}
\captionof{figure}{Average SAT solver assumptions per successful
application of~\ref{rule_abs_cl_init} (``A'') on 145 selected instances solved 
by \depqbfDynBloqqerTFTT (Table~\ref{tab:applications:no:bloqqer}), 
failed assumptions (``F''), and literals in the clauses
learned by~\ref{rule_abs_cl_init} (``L''), $\mathit{log}_{10}$
scale on y-axis.}
\label{fig:dyn:bloqqer:tt:tf:assumptions}
\end{minipage}
\end{table}

Consider the best performing variant \depqbfDynBloqqerTFTT in
Table~\ref{tab:applications:no:bloqqer}. Table~\ref{tab:dyn:bloqqer:tt:tf:stats}
shows statistics on the number of attempted and successful applications of
axioms~\ref{rule_gen_cl_init}, \ref{rule_gen_cu_init}
and~\ref{rule_abs_cl_init} by Bloqqer and SAT solving. On the 466 instances
solved by \depqbfDynBloqqerTFTT, Bloqqer was called on $\qclauset[A]$ at least once
on 185 instances and successfully solved $\qclauset[A]$ at least once on 184
instances, thus allowing applications of axiom~\ref{rule_gen_cl_init}
or~\ref{rule_gen_cu_init}. Bloqqer was disabled at run time on 143 instances
due to the predefined limits. SAT solving for the trivial truth test for
\ref{rule_gen_cu_init} (respectively, to solve the existential abstraction
for~\ref{rule_abs_cl_init}) was applied at least once on 364 (445) instances,
was successful at least once on 177 (226) instances, and was disabled at run
time on 21 (70) instances. While Bloqqer is applied less frequently than
SAT solving by a factor of two,
applications of Bloqqer have much higher success rates ($97\%$) than SAT
solving ($8\%$ and $22\%$).

In the following, we analyze applications of the abstraction-based clause
axiom in more detail.  The extraction of failed assumptions in SAT solving
for~\ref{rule_abs_cl_init} allows to reduce the size of the clauses learned by
abstraction-based conflict generation. On 145 instances solved by
\depqbfDynBloqqerTFTT (Table~\ref{tab:applications:no:bloqqer}),
axiom~\ref{rule_abs_cl_init} was applied more than once. Per instance, on
average (median) 3,336K (70.7K) assumptions were passed to the SAT
solver when solving $\eabs{\qclauset}[A]$, 28.8K (2.3K) failed
assumptions were extracted, and the clauses finally learned had 20.7K (1.5K) 
literals. The difference in the number of failed assumptions and the
size of learned clauses is due to additional, heuristic minimization of the set of failed
assumptions which we apply. Given that $\eabs{\qclauset}[A]$ is unsatisfiable, it may be
possible to remove assignments from $A$, thus resulting in a smaller
assignment $A'$, while preserving unsatisfiability of
$\eabs{\qclauset}[A']$. Additionally, universal reduction by
rule~\ref{rule_red} may remove literals from the clause learned by generalized
conflict generation.  Figure~\ref{fig:dyn:bloqqer:tt:tf:assumptions} shows
related average statistics.

The abstraction-based clause axiom~\ref{rule_abs_cl_init} is particularly
effective on instances from the domain of conformant planning. With variant
\depqbfDynBloqqerTFTT (Table~\ref{tab:applications:no:bloqqer}), 81
unsatisfiable instances from conformant planning were solved by a
\emph{single} application of axiom~\ref{rule_abs_cl_init} where the empty
clause was derived immediately.  On 13 of these 81
instances, solving $\eabs{\qclauset}$ was hard for the SAT
solver, which took more than 900 seconds. In contrast to
\depqbfDynBloqqerTFTT, \depqbfOnlyDynQBCE does not use
axiom~\ref{rule_abs_cl_init} and failed to solve 15 of the 81 instances.

Additionally, we evaluated the variants of \depqbf and the other solvers on
the benchmarks of the applications and QBFLIB tracks \emph{with} preprocessing
by \bloqqer before
solving.\footnote{We refer to the appendix of this paper with additional tables.} In the QBFLIB track,
\rareqs and \depqbfOnlyDynTT solved the largest number of instances ($134$ in
total each instead of $80$ and $108$ in Table~\ref{tab:qbflib:track}).
However, here it is important to remark that already the plain variant \depqbf
solved $132$ instances if \bloqqer is applied before solving.  With partial
preprocessing by \bloqqer (using only QBCE and universal expansion), on
the applications track \qellnc and \depqbfDynTFTT each solved $483$ instances,
i.e., $49$ and $22$ more instances than without preprocessing
(Table~\ref{tab:applications:no:bloqqer}). Note that the best variant
\depqbfDynBloqqerTFTT of \depqbf in Table~\ref{tab:applications:no:bloqqer}
solved $480$ instances. Partial preprocessing increases the number of
instances solved by the variants of \depqbf. In contrast to that, with full
preprocessing the performance of the variants of \depqbf on the applications
track considerably decreases.  If \bloqqer is applied to the full extent
(enabling all techniques), then \rareqs, \qellnc, and \qesto solve $547$,
$501$, and $463$ instances, respectively. The variant \depqbfDynTFTT of
\depqbf, however, which solved $483$ instances with partial preprocessing,
solves only $434$ instances.  The phenomenon that preprocessing is not always
beneficial was also observed in the QBF
Galleries~\cite{gallery14,Lonsing201692}.  When applied
without restrictions, \bloqqer rewrites a formula and thus destroys or blurs
structural information. For some approaches structural information is essential to
fully exploit their individual strengths.


\section{Conclusion}
\label{sec:concl}

The Q-resolution calculus \qrescalc is a proof system which underlies
clause and cube learning in QCDCL-based QBF solvers. In QCDCL, the
traditional axioms of \qrescalc either select clauses which already
appear in the input PCNF $\qclauset$ or construct cubes which are
implicants of the matrix of $\qclauset$.

To overcome the limited deductive power of the traditional axioms, we
presented two generalized axioms to derive clauses and
cubes based on checking the satisfiability of $\qclauset$ under an
assignment $A$ generated in QCDCL. We also
formulated a new axiom to derive clauses which relies on an
existential abstraction of $\qclauset$ and on SAT solving. This
abstraction-based axiom leverages QU-resolution and allows to
overcome the prefix order restriction in QCDCL to some extent. The new axioms can be
integrated in \qrescalc and used for clause and cube learning 
in the QCDCL framework. They are  
compatible with any variant of Q-resolution, like
long-distance resolution~\cite{DBLP:conf/cp/ZhangM02},
QU-resolution~\cite{DBLP:conf/cp/Gelder12}, and combinations
thereof~\cite{DBLP:conf/sat/BalabanovWJ14}. 

For axiom applications in practice, \emph{any}
complete or incomplete QBF decision procedure can be applied to check
the satisfiability of $\qclauset$ under assignment $A$. In this respect, the
generalized axioms act as an interface to combining Q-resolution with
other QBF decision procedures in \qrescalc. 
The combination of orthogonal techniques like expansion via the
generalized axioms results in variants of \qrescalc which are stronger
than traditional \qrescalc with respect to proof complexity. 
A proof $P$ produced by such variants of \qrescalc
can be checked in time which is polynomial in the size of $P$ if subproofs of all clauses and cubes
derived by the generalized axioms are provided by the QBF decision procedures.

In order to demonstrate the effectiveness of the newly introduced
axioms, we made case studies using the QCDCL solver \depqbf. We
applied the preprocessor \bloqqer and SAT solving as incomplete QBF
decision procedures in \depqbf to detect axiom
applications. Overall, our experiments showed a considerable
performance improvement of QCDCL, particularly on application
instances.

As future work, it would be interesting to integrate techniques like
expansion-based QBF solving more tightly in QCDCL than what
we achieved with \bloqqer in our case study. A tighter integration
would allow to reduce the run time overhead we observed in
practice. Further research directions include axiom applications based
on different QBF solving techniques in parallel QCDCL, and potential
relaxations of the prefix order in assignments used for
axiom applications.



\clearpage
\newpage

\begin{appendix}

\section{Simulation Result}

\newcommand{\qrescalcabs}{QRES-abs\xspace}

We show that the combination of
\qrescalc and the abstraction-based axiom~\ref{rule_abs_cl_init} polynomially simulates
QU-resolution.

\begin{definition}[Q-Resolution with Abstraction-based
Axiom] \label{def_qres_calc_with_abs_axiom} The combination of the
Q-resolution calculus \qrescalc (Definition~\ref{def_qres_calculus}) and the
abstraction-based axiom~\ref{rule_abs_cl_init} (Definition~\ref{def_abs_gen})
is denoted by \emph{\qrescalcabs.}
\end{definition}

Note that \qrescalcabs includes all the rules of \qrescalc and additionally
the abstraction-based axiom~\ref{rule_abs_cl_init}.

\begin{definition}[QU-resolution
calculus~\cite{DBLP:conf/cp/Gelder12}]\label{def_qu_resolution} Let $\qclauset
= \prefix.\clauset$ be a PCNF. The rules of the \emph{QU-resolution calculus}
include all rules of \qrescalc (Definition~\ref{def_qres_calculus}) and
additionally the following clause resolution rule.

\begin{align}\tag{$\mathit{qures}$} \AxiomC{$C_1 \cup \{p\}$} \AxiomC{$C_2
\cup \{\bar p\}$} \BinaryInfC{$C_1 \cup C_2$}
\label{rule_qures} \DisplayProof \quad
\begin{minipage}{0.50\textwidth} if for all $x \in \prefix\colon \{x, \bar x\}
\not \subseteq (C_1 \cup C_2)$, \\ $\bar p \not \in C_1$, $p \not \in C_2$,
\\$C_1$,$C_2$ are clauses, and $\quant{\prefix}{p} = \forall$
\end{minipage}
\end{align}

\end{definition}

QU-resolution extends traditional Q-resolution by additionally allowing clause
resolvents over universally quantified pivot variables.

In the following, we use the terminology of \emph{proof systems} (of QBFs) and
\emph{p-simulation} as defined by Cook and
Reckhow~\cite{DBLP:journals/jsyml/CookR79} in an informal way and focus on
unsatisfiable QBFs.

Note that the Q-resolution systems with the generalized and abstraction-based
axioms we defined in the paper are proof systems in the sense of Cook and
Reckhow~\cite{DBLP:journals/jsyml/CookR79} \emph{provided that} the QBF
decision procedures that are used to check the satisfiability of the QBF
$\qclauset[A]$ with respect to a (QCDCL) assignment $A$ produce a proof of
$\qclauset[A]$. These proofs of $\qclauset[A]$ appear as subproofs in the
final proof of $\qclauset$. See also the related remark at the end of
Section~\ref{sec:gen:axioms}.

Informally, a proof system for unsatisfiable QBFs is a function computable in
deterministic polynomial time which associates a proof to every unsatisfiable
QBF $\qclauset$. In most general terms, a proof is a string over some finite
alphabet. We consider \emph{clause resolution proofs} of unsatisfiability
as defined in Section~\ref{sec:qcdcl}.

A proof system $S_2$ \emph{p-simulates} another proof system $S_1$ if there
exists a function computable in deterministic polynomial time which maps any
proof $P_1$ in $S_1$ to a proof $P_2$ in $S_2$. Note that the length of $P_2$
is polynomial in the length of $P_1$.

\begin{theorem} \qrescalcabs p-simulates QU-resolution.
\end{theorem}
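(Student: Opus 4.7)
The plan is to proceed by induction on the structure of a QU-resolution refutation $\pi$ of $\qclauset = \prefix.\clauset$, translating each inference step by step into a \qrescalcabs derivation of the same clause. Since \qrescalcabs contains every rule of \qrescalc, three of the four QU-resolution rules (\ref{rule_cl_init}, \ref{rule_red}, and the existential variant of \ref{rule_res}) translate to themselves and need no work. The entire content of the simulation therefore lies in replacing each application of the universal-pivot rule~\ref{rule_qures} by a constant-size \qrescalcabs derivation of the same resolvent. Since each step is replaced by at most $O(1)$ steps, the resulting \qrescalcabs proof has size linear in $|\pi|$, which gives the required polynomial bound.

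For the key case, suppose $\pi$ contains a \ref{rule_qures}-step inferring $C_1 \cup C_2$ from $C_1 \cup \{p\}$ and $C_2 \cup \{\bar p\}$ with $\quant{\prefix}{p} = \forall$. By induction hypothesis both premises have \qrescalcabs derivations. I will then produce the resolvent by a single application of~\ref{rule_abs_cl_init}. Define the assignment
\[
A \;:=\; \{\,\bar l \mid l \in C_1 \cup C_2\,\}.
\]
This is a well-defined assignment because the side condition of~\ref{rule_qures} forbids complementary literals in $C_1 \cup C_2$, so each variable of $C_1 \cup C_2$ is assigned exactly one value. I then need to check that $\eabs{\qclauset}[A]$ is unsatisfiable. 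The verification is purely syntactic: under $A$, the original matrix clause $C_1 \cup \{p\}$ has every literal in $C_1$ deleted (their negations are in $A$) and is not removed (neither $p$ nor $\bar p$ lies in $A$, using again the side conditions of~\ref{rule_qures}), so it reduces to the unit $\{p\}$. Symmetrically, $C_2 \cup \{\bar p\}$ reduces to $\{\bar p\}$. The two complementary units suffice to conclude that $\eabs{\qclauset}[A]$ is propositionally unsatisfiable, so~\ref{rule_abs_cl_init} applies and yields exactly $\bigvee_{l \in A} \bar l = C_1 \cup C_2$.

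The main obstacle is bookkeeping: one has to ensure that in each case the assignment $A$ really is an assignment (no variable receives two values) and that the relevant premise clauses survive the application of $A$ and reduce to complementary units, rather than being removed entirely. Both facts follow from the tautology-free side condition of~\ref{rule_qures} and from the fact that $C_1, C_2$ come from clauses of $\clauset$ (the premises of the step in $\pi$ need not themselves be matrix clauses, but I only use the matrix clauses $C_1 \cup \{p\}$, $C_2 \cup \{\bar p\}$ implicitly via the inductive hypothesis — in the simulation the premises $C_1 \cup \{p\}$ and $C_2 \cup \{\bar p\}$ of the \ref{rule_qures} step are already derived in \qrescalcabs and need not be matrix clauses; only their existence matters, since \ref{rule_abs_cl_init} reasons about $\eabs{\qclauset}$ directly). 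A small amount of extra care handles the situation where a premise of \ref{rule_qures} is itself derived rather than axiomatic: the argument above does not use the premises directly but only $\eabs{\qclauset}$, so the inductive translation of those premises is irrelevant to the correctness of the \ref{rule_abs_cl_init} application. Once this is observed, the simulation is complete and clearly polynomial-time computable, establishing the theorem.
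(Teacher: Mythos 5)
Your translation is structurally the same as the paper's: keep all \qrescalc steps unchanged and replace each \ref{rule_qures} step with universal pivot $p$ by a single application of \ref{rule_abs_cl_init} under $A=\{\bar l \mid l\in C_1\cup C_2\}$, justified by the complementary unit clauses $(p)$ and $(\bar p)$ to which the two premises reduce under $A$; the well-definedness of $A$ via non-tautology and the linear size bound also match. However, your final paragraph contains a genuine error. You claim that when the premises $C_1\cup\{p\}$ and $C_2\cup\{\bar p\}$ are derived clauses rather than clauses of $\clauset$, ``the argument does not use the premises directly but only $\eabs{\qclauset}$, so the inductive translation of those premises is irrelevant.'' This is backwards: your verification that $\eabs{\qclauset}[A]$ is unsatisfiable consists precisely of locating the two premise clauses \emph{inside the formula being abstracted} and observing that $A$ reduces them to $(p)$ and $(\bar p)$. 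If they are not in $\clauset$, nothing forces $\eabs{\qclauset}[A]$ to be unsatisfiable, because QU-resolution derives clauses---in particular by universal reduction via rule \ref{rule_red}---that are \emph{not} propositional consequences of the matrix. Concretely, take $\prefix:=\exists x_1 x_2\,\forall p\, u_1 u_2$ and $\clauset:=(x_1\vee p\vee u_1)\wedge(\bar x_1\vee p\vee u_1)\wedge(x_2\vee\bar p\vee u_2)\wedge(\bar x_2\vee\bar p\vee u_2)$. Reducing $u_1$ (resp.\ $u_2$) and resolving on $x_1$ (resp.\ $x_2$) yields the units $(p)$ and $(\bar p)$, and the final \ref{rule_qures} step resolves these on $p$ to obtain $\emptyset$, so $A=\emptyset$; but $\eabs{\qclauset}$ is propositionally satisfiable (set $u_1=u_2$ to true), so \ref{rule_abs_cl_init} is not applicable and your translation of that step fails.

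The repair is to run the simulation on the PCNF extended with all previously derived clauses: after each simulated step, add the derived clause conjunctively to the matrix (exactly as learned clauses are added in QCDCL) and apply \ref{rule_abs_cl_init} to the existential abstraction of the \emph{current} formula. Then both premises of the \ref{rule_qures} step literally occur in the abstracted matrix, your complementary-units argument applies verbatim, the check remains a polynomial-time scan, and soundness of the whole chain follows by induction from Theorem~\ref{thm_abs_conf_gen_correctness} together with the soundness of the remaining rules. (The paper's own sketch is silent on this point---it tacitly treats the premises as present in the abstracted formula---but it does not make the incorrect claim that their membership in that formula is irrelevant.)
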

\begin{proof}[sketch] Given an unsatisfiable QBF $\qclauset$ and a
QU-resolution proof $P$ of $\qclauset$. We outline how to obtain a
\qrescalcabs proof $P'$ of $\qclauset$ in deterministic polynomial time with
respect to the length of $P$. The length of $P'$ is polynomial in the length
of $P$.  To this end, we identify the inference rules that may appear in the
QU-resolution proof $P$ as proof steps and construct corresponding steps by
inference rules in \qrescalcabs.

The rules~\ref{rule_cl_init}, \ref{rule_cu_init}, \ref{rule_red},
and~\ref{rule_res} are all part of \qrescalc
(Definition~\ref{def_qres_calculus}) and hence by
Definition~\ref{def_qres_calc_with_abs_axiom} are also part of
\qrescalcabs. Hence any derivation by~\ref{rule_cl_init}, \ref{rule_cu_init},
\ref{rule_red}, and~\ref{rule_res} in $P$ can directly be mapped to a
corresponding derivation in a \qrescalcabs proof $P'$.

Consider a derivation of a resolvent $(C_1 \cup C_2)$ from clauses $(C_1 \cup
\{p\})$ and $(C_2 \cup \{\bar p\})$ by rule~\ref{rule_qures} in $P$. Note that
the pivot variable $p$ is universally quantified. Consider the existential
abstraction $\eabs{\qclauset}$ of the given QBF $\qclauset$ and the assignment
$A := \{\bar l \mid l \in (C_1 \cup C_2)\}$ obtained by negating the literals
in the resolvent. Note that $A$ does not contain complementary literals
because the resolvent $(C_1 \cup C_2)$ is nontautological. The formula $\eabs{\qclauset}[A]$ contains the complementary
unit clauses $(p) = (C_1 \cup \{p\})[A]$ and $(\bar p) = (C_2 \cup \{\bar
p\})[A]$. Hence $\eabs{\qclauset}[A]$ is unsatisfiable, which can be
determined in deterministic polynomial time by scanning over the clauses in
$\eabs{\qclauset}[A]$. Therefore, the resolvent $(\bigvee_{l \in A} \bar l) =
(C_1 \cup C_2)$ can be derived by the abstraction-based
axiom~\ref{rule_abs_cl_init} in $P'$.

We have shown how to map derivations of clauses in the QU-resolution proof $P$
to derivations in a \qrescalcabs proof $P'$. The mapping can be computed in
deterministic polynomial time with respect to the size of $P$ because in the
procedure outlined above it is not necessary to invoke a SAT solver (i.e. an
NP oracle) for applications of the abstraction-based
axiom~\ref{rule_abs_cl_init}. \qed
\end{proof}

\section{Additional Experimental Data}

\begin{table}
\begin{center}
\begin{tabular}{l@{\enskip\enskip}c@{\enskip}c@{\enskip}c@{\enskip}c}
\hline
\emph{Solver} & \emph{\#T} & \emph{\#U} & \emph{\#S} & \emph{Time} \\
\hline
\qellnc & 483 & 306 & 177 & 480K \\
\depqbfDynTFTT & 483 & 260 & 223 & 509K \\
\depqbfOnlyDynTF & 481 & 262 & 219 & 528K \\
\depqbfDynBloqqerTFTT & 480 & 257 & 223 & 516K \\
\rareqs & 471 & 272 & 199 & 509K \\
\caqe & 465 & 248 & 217 & 534K \\
\depqbfOnlyDynTT & 464 & 243 & 221 & 526K \\
\depqbfOnlyDynQBCE & 456 & 242 & 214 & 542K \\
\depqbfOnlyDynBloqqer & 450 & 245 & 205 & 550K \\
\qesto & 401 & 212 & 189 & 662K \\
\ghostq & 306 & 148 & 158 & 823K \\
\hline
\end{tabular}
\end{center}
\caption{QBF Gallery 2014 applications track with partial preprocessing by \bloqqer (only QBCE and
  expansion of universal variables), 735 instances. Same column headers as Table~\ref{tab:prepro:track}.}
\label{tab:applications:partial:bloqqer}
\end{table}


\begin{table}
\begin{center}
    \begin{tabular}{l@{\enskip\enskip}c@{\enskip}c@{\enskip}c@{\enskip}c}
      \hline
\emph{Solver} & \emph{\#T} & \emph{\#U} & \emph{\#S} & \emph{Time} \\
      \hline
      \rareqs & 547 & 314 & 233 & 379K \\
      \qellnc & 501 & 301 & 200 & 445K \\
      \qesto & 463 & 248 & 215 & 558K \\
      \depqbfDynTFTT & 434 & 209 & 225 & 579K \\
      \depqbfDynBloqqerTFTT & 432 & 209 & 223 & 585K \\
      \depqbfOnlyDynTT & 426 & 200 & 226 & 586K \\
      \depqbfOnlyDynTF & 418 & 207 & 211 & 623K \\
      \depqbfOnlyDynBloqqer & 409 & 201 & 208 & 622K \\
      \depqbfOnlyDynQBCE & 407 & 200 & 207 & 623K \\
      \caqe & 401 & 193 & 208 & 640K \\
      \ghostq & 350 & 176 & 174 & 739K \\
      \hline
    \end{tabular}
\end{center}
\caption{QBF Gallery 2014 applications track with full preprocessing by \bloqqer, 735 instances. Same column headers as Table~\ref{tab:prepro:track}.}
\label{tab:applications:full:bloqqer}
\end{table}


\begin{table}[ht]
\begin{center}
\begin{tabular}{l@{\enskip\enskip}r@{\enskip}r@{\enskip}r@{\enskip}r}
\hline
\emph{Solver} & \multicolumn{1}{c}{\emph{\#T}} & \multicolumn{1}{c}{\emph{\#U}} & \multicolumn{1}{c}{\emph{\#S}} & \multicolumn{1}{c}{\emph{Time}} \\
\hline
\rareqs & 134 & 68 & 66 & 270K \\
\depqbfOnlyDynTT & 134 & 65 & 69 & 274K \\
\depqbfDynTFTT & 133 & 66 & 67 & 272K \\
\qesto & 133 & 68 & 65 & 274K \\
\depqbfDynBloqqerTFTT & 132 & 66 & 66 & 275K \\
\depqbfOnlyDynQBCE & 132 & 64 & 68 & 277K \\
\depqbfNoQBCETT & 131 & 62 & 69 & 278K \\
\depqbfOnlyDynBloqqer & 130 & 65 & 65 & 278K \\
\depqbfOnlyDynTF & 129 & 64 & 65 & 281K \\
\depqbfNoQBCE & 128 & 61 & 67 & 282K \\
\qellc & 116 & 56 & 60 & 303K \\
\ghostq & 92 & 37 & 55 & 349K \\
\caqe & 89 & 36 & 53 & 347K \\
\hline
\end{tabular}
\end{center}
\caption{QBF Gallery 2014 QBFLIB track (preprocessed with \bloqqer, 276
  instances). Same column headers as Table~\ref{tab:prepro:track}.}
\label{tab:qbflib:track:bloqqer}
\end{table}

\end{appendix}

\end{document}